\tikzset{cross/.style={cross out, draw,
			minimum size=2*(#1-\pgflinewidth),
			inner sep=0pt, outer sep=0pt}}
\tikzstyle{vecArrow} = [thick, decoration={markings,mark=at position
\tikzstyle{innerWhite} = [semithick, white,line width=1.4pt, shorten >= 4.5pt]
\newtheorem{theorem}{Theorem}
\newtheorem{lemma}[theorem]{Lemma}
\newtheorem{corollary}[theorem]{Corollary}
\newtheorem{definition}[theorem]{Definition}
\newtheorem{observation}[theorem]{Observation}
\newtheorem{proposition}[theorem]{Proposition}
\newtheorem{claim}[theorem]{Claim}
\crefname{claim}{Claim}{Claims}
\crefname{fact}{Fact}{Facts}
\crefname{example}{Example}{Examples}
\DeclareMathOperator*{\argmin}{arg\,min}
\newcommand{\E}{\textrm{\textbf{E}}}
\renewcommand{\P}{\textrm{\textbf{P}}}
\newcommand{\eps}{\varepsilon}
\newcommand{\e}{\textrm{e}}
\newcommand{\reassign}{\texttt{reassign}}
\newcommand{\kmp}{\texttt{$k$-means++}}
\newcommand{\localsearch}{\texttt{LocalSearch++}}
\renewcommand{\O}{\mathcal{O}}
\newcommand{\Qin}{Q_{\textrm{in}}}
\icmltitlerunning{\kmp: constant approximation}
\begin{document}

\twocolumn[
	\icmltitle{\kmp: few more steps yield constant approximation}



	\icmlsetsymbol{equal}{*}

	\begin{icmlauthorlist}
		\icmlauthor{Davin Choo}{equal,eth}
		\icmlauthor{Christoph Grunau}{equal,eth}
		\icmlauthor{Julian Portmann}{equal,eth}
		\icmlauthor{Václav Rozhoň}{equal,eth}
	\end{icmlauthorlist}

	\icmlaffiliation{eth}{ETH Zürich}

	\icmlcorrespondingauthor{Davin Choo}{chood@ethz.ch}
	\icmlcorrespondingauthor{Christoph Grunau}{cgrunau@ethz.ch}
	\icmlcorrespondingauthor{Julian Portmann}{pjulian@ethz.ch}
	\icmlcorrespondingauthor{Václav Rozhoň}{rozhonv@ethz.ch}

	\icmlkeywords{k-means, clustering, local search, theory}

	\vskip 0.3in
]

\printAffiliations{\icmlEqualContribution}

\begin{abstract}
	The \kmp{} algorithm of Arthur and Vassilvitskii (SODA 2007) is a state-of-the-art algorithm for solving the $k$-means clustering problem and is known to give an $\O(\log k)$-approximation in expectation.
	Recently, Lattanzi and Sohler (ICML 2019) proposed augmenting \kmp{} with $\O(k \log \log k)$ local search steps to yield a constant approximation (in expectation) to the $k$-means clustering problem.
	In this paper, we improve their analysis to show that, for any arbitrarily small constant $\eps > 0$, with only $\eps k$ additional local search steps, one can achieve a constant approximation guarantee (with high probability in $k$), resolving an open problem in their paper.
\end{abstract}

\section{Introduction}

$k$-means clustering is an important unsupervised learning task often used to analyze datasets.
Given a set $P$ of points in $d$-dimensional Euclidean space $\mathbb{R}^d$ and an integer $k$, the task is to partition $P$ into $k$ clusters while minimizing the total cost of the partition.
Formally, the goal is to find a set $C \in \mathbb{R}^d$ of $k$ centers minimizing the following objective:
\[
	\sum_{p \in P} \min_{c \in C} \Vert p - c \Vert^2,
\]
where points $p \in P$ are assigned to the closest candidate center $c \in C$.

Finding an optimal solution to this objective was proven to be NP-hard \cite{aloise2009np, mahajan2009planar}, and the problem was even shown to be hard to approximate to arbitrary precision \cite{awasthi2015hardness, lee2017improved}.
The currently best known approximation ratio is 6.357 \cite{ahmadian2019better}, while other constant factor approximation algorithms exist \cite{jain2001approximation, kanungo2004local}.
For constant dimensions $d$, $(1 + \eps)$-approximation algorithms are known \cite{cohen2018fast, cohen2019local, friggstad2019local, bandyapadhyay2015variants}.
However, these algorithms are mainly of theoretical interest and not known to be efficient in practice.

On the practical side of things, the canonical $k$-means algorithm \cite{lloyd1982least} proved to be a good heuristic.
Starting with $k$ initial points (e.g. chosen at random), Lloyd's algorithm iteratively, in an alternating minimization manner, assigns points to the nearest center and updates the centers to be the centroids of each cluster, until convergence.
Although the alternating minimization provides no provable approximation guarantee, Lloyd's algorithm never increases the cost of the initial clustering.
Thus, one way to obtain theoretical guarantees is to provide Lloyd's algorithm with a provably good initialization.

The \kmp{} algorithm (see \cref{alg:kmp}) of Arthur and Vassilvitskii \yrcite{arthur2007k} is a well-known algorithm for computing an initial set of $k$ centers with provable approximation guarantees.
The initialization is performed by incrementally choosing $k$ initial seeds for Lloyd using $D^2$-sampling, i.e., sample a point with probability proportional to its squared distance to the closest existing center.
They showed that the resultant clustering is an $\O(\log k)$-approximation in expectation.
This theoretical guarantee is substantiated by empirical results showing that \kmp{} can heavily outperform random initialization, with only a small amount of additional computation time on top of running Lloyd's algorithm.
However, lower bound analyses \cite{brunsch2013bad, bhattacharya2016tight} show that there exist inputs where \kmp{} is $\Omega(\log k)$-competitive with high probability in $k$.

Recently, Lattanzi and Sohler \yrcite{lattanzi2019better} proposed a variant of local search after picking $k$ initial centers via \kmp{} (see \cref{alg:localsearch}):
In each step, a new point is sampled with probability proportional to its current cost and used to replace an existing center such as to maximize the cost reduction.
If all possible swaps increase the objective cost, the new sampled point is discarded.
Following their notation, we refer to this local search procedure as \localsearch{}.
They showed that performing $\O(k \log \log k)$ steps of \localsearch{} after \kmp{} improves the expected approximation factor from $\O(\log k)$ to $\O(1)$, and stated that it is an interesting open question to prove that $\O(k)$ local search steps suffice to obtain a constant factor approximation.

\begin{algorithm}[tb]
	\caption{\kmp{} seeding}
	\label{alg:kmp}
	Input: $P$, $k$, $\ell$
	\begin{algorithmic}[1]
		\STATE Uniformly sample $p \in P$ and set $C = \{ p \}$.
		\FOR{$i \leftarrow 2, 3, \dots, k$}
		\STATE Sample $p \in P$ w.p. $\frac{cost(p, C)}{\sum_{q \in P} cost(q, C)}$ and add it to $C$.
		\ENDFOR
	\end{algorithmic}
\end{algorithm}

\begin{algorithm}[tb]
	\caption{One step of \localsearch}
	\label{alg:localsearch}
	{\bfseries Input:} $P$, $C$
	\begin{algorithmic}[1]
		\STATE Sample $p \in P$ with probability $\frac{cost(p, C)}{\sum_{q \in P} cost(q, C)}$
		\STATE $p' = \argmin_{q \in C} cost(P, C \setminus \{ q \} \cup \{ p \})$
		\IF{$cost(P, C \setminus \{ p' \} \cup \{ p \} ) < cost(P, C)$}
		\STATE $C = C \setminus \{ p' \} \cup \{ p \}$
		\ENDIF
		\STATE \textbf{return} $C$
	\end{algorithmic}
\end{algorithm}

\subsection{Our contribution}

In this paper, we answer the open question by Lattanzi and Sohler \yrcite{lattanzi2019better} in the affirmative.
We refine their analysis to show that with only $\eps k$ additional local search steps, one can achieve an approximation guarantee of $\O(1 / \eps^3)$ with probability $1 - \exp \left( -\Omega \left( k^{0.1} \right) \right)$.
Compared to \cite{lattanzi2019better}, we improve the number of search steps needed to achieve a constant approximation from $\O(k \log \log k)$ to just $\eps k$.
Furthermore, our statement holds with high probability in $k$, while their statement gives a guarantee in expectation.

\begin{restatable}[Main theorem]{theorem}{mainthm}
	\label{thm:main}
	Let $k \in \Omega(1/ \eps^{20})$ and $0 < \eps \leq 1$.
	Suppose we run \cref{alg:kmp} followed by $\ell = \eps k$ steps of \cref{alg:localsearch}.
	We have $cost(P,C) \leq \left( 10^{30} / \eps^3 \right) \cdot cost(P,C^*)$ with probability at least $1 - \exp(-\Omega(k^{0.1}))$.
\end{restatable}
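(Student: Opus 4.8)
The plan is to isolate the two phases --- the \kmp{} seeding producing $C_0$ and the subsequent $\ell = \eps k$ rounds of \localsearch{} --- and to connect them through the combinatorial notion of an optimal cluster being \emph{settled} (equivalently, captured): an optimal cluster $A^* \in C^*$ is settled by a candidate set $C$ if some center of $C$ lies close enough to the centroid of $A^*$ that $A^*$ alone is served at $O(1)$ times its optimal cost. The payoff of this notion is a routine charging lemma: once all but an $O(\eps)$-fraction of the optimal clusters are settled and the remaining ones are not too expensive, one gets $cost(P,C) \le O(1/\eps^3) \cdot cost(P,C^*)$. So the real work is to reach such a configuration with the stated probability using only $\eps k$ search steps.

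\textbf{Seeding phase.} First I would revisit the Arthur--Vassilvitskii analysis of \cref{alg:kmp} and extract a high-probability statement: with probability $1 - \exp(-\Omega(k^{0.1}))$, the seeding produces $C_0$ with $cost(P,C_0) \le \poly(k) \cdot cost(P,C^*)$ and, more importantly, with at most $O(\eps k / \poly(1/\eps))$ unsettled optimal clusters, whose total cost is at most a constant times $cost(P,C^*)$ outside a small ``heavy'' exceptional set. The key estimate is that in each $D^2$-sampling round, conditioned on the history, a \emph{new} optimal cluster becomes settled with probability proportional to the fraction of the current cost carried by not-yet-settled clusters, times an absolute constant (the constant coming from a Markov bound \emph{inside} the sampled cluster, so that the sampled point lands on a representative portion of it). Turning this into a concentration statement is delicate because the rounds are dependent; I would dominate the relevant counting process by a sum of independent indicators (or build a supermartingale and apply Azuma). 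This is where the $k^{0.1}$ in the exponent and the hypothesis $k = \Omega(1/\eps^{20})$ ultimately originate: slack to absorb the various $\poly(1/\eps)$ factors and lower-order terms.

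\textbf{Local-search phase.} The workhorse here is a single-step improvement lemma in the spirit of Lattanzi--Sohler, suitably sharpened: conditioned on the current solution $C$ being still far from an $O(1/\eps^3)$-approximation --- equivalently, on there being more than $\Theta(\eps k / \poly(1/\eps))$ unsettled optimal clusters of non-negligible cost --- a single execution of \cref{alg:localsearch} settles a previously unsettled cluster, while the cost increase at the center that is swapped out is dominated by the gain, with probability $\Omega(1/\poly(1/\eps))$. This combines: (i) an expensive unsettled cluster is $D^2$-sampled with probability at least its $\Omega(1/(\eps k))$ share of the total cost, summed over the $\Theta(\eps k/\poly(1/\eps))$ such clusters; and (ii) an averaging argument producing a center whose removal costs only an $O(1/k)$ fraction of $cost(P,C)$, plus a representative-point argument so the newly added center actually settles the targeted cluster. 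Since \cref{alg:localsearch} never increases the cost, ``bad'' rounds do no harm, so it suffices that enough ``good'' rounds occur among the $\eps k$ steps; a Chernoff bound over the rounds (again handling the mild dependence by conditioning on the filtration) shows the number of unsettled clusters is driven below the target threshold except with probability $\exp(-\Omega(k^{0.1}))$. A union bound over the two phases, followed by the charging lemma, completes the proof.

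\textbf{Main obstacle.} The crux is the single-step local-search lemma together with its concentration: one must argue simultaneously that a profitable swap exists (an expensive unsettled cluster to sample from, and a cheap-to-remove center to swap out) and that the probability of actually hitting such a swap is large enough that $\eps k$ rounds suffice with the claimed failure probability --- all while the potential (the number of unsettled clusters, or equivalently $cost(P,C)/cost(P,C^*)$) is a moving, history-dependent target. Controlling this dependence across the $\eps k$ rounds, rather than merely bounding expectations as in \cite{lattanzi2019better}, is precisely what upgrades the guarantee from ``in expectation'' to ``with high probability in $k$'' and is the most delicate part of the argument.
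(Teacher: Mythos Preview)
Your outline has the right ingredients (the settled/unsettled dichotomy, concentration over the rounds, a single-step swap lemma), but the way you wire them together does not close. Two concrete gaps:

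\textbf{(1) The potential you track does not control the approximation factor.} You assert that ``being far from an $O(1/\eps^3)$-approximation'' is \emph{equivalent} to ``there being more than $\Theta(\eps k/\poly(1/\eps))$ unsettled optimal clusters of non-negligible cost''. This is false: a single $\beta$-unsettled cluster can carry essentially all of $cost(P,C)$, so the number of unsettled clusters says nothing about $\alpha = cost(P,C)/cost(P,C^*)$. Your final charging lemma therefore needs the extra clause ``the remaining ones are not too expensive'', and nothing in your argument delivers that clause. In particular, your probability estimate ``an expensive unsettled cluster is $D^2$-sampled with probability at least its $\Omega(1/(\eps k))$ share of the total cost, summed over the $\Theta(\eps k/\poly(1/\eps))$ such clusters'' presupposes there are many of them; if there is one heavy unsettled cluster, that sum is $\Omega(1)$ for a trivial reason, but then your progress measure (the count) only needs to drop by one, and the next paragraph explains why even that is not guaranteed.

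\textbf{(2) Swaps can unsettle, so the unsettled count need not decrease.} A \localsearch{} step removes a center as well as adding one; the removed center may have been the unique witness that some other cluster was $\beta$-settled. Hence even in a ``good'' round where the sampled point settles its own cluster, the net change in the number of $\beta$-unsettled clusters can be zero. The paper's structural lemma (\cref{lem:quality_deteriorates_slowly}) only shows this count \emph{stays bounded} by $O(k/\sqrt{\beta})$ throughout; it does not, and cannot, show the count is driven to zero by local search. So ``enough good rounds among $\eps k$ steps'' does not translate into ``few unsettled clusters at the end''.

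The paper avoids both issues by tracking \emph{cost}, not the unsettled count, and by proving an \emph{adaptive} single-step lemma: whenever $\alpha \ge 10^9$, one step of \localsearch{} multiplies the cost by at most $1 - \min\{\sqrt[3]{\alpha},\,k^{0.15}\}/(2\cdot 10^4 k)$ with constant probability (\cref{lem:lemma3}). The $\sqrt[3]{\alpha}$ comes from choosing the settledness threshold $\beta \approx \alpha^{2/3}$ and using that at most $O(k/\sqrt{\beta}) = O(k/\sqrt[3]{\alpha})$ clusters are $\beta$-unsettled; this is exactly where the structural bound is cashed in. Because the improvement rate grows with $\alpha$, a short phase argument (first drive $\alpha$ from $\exp(k^{0.1})$ down to $k^{0.45}$ using the $k^{0.15}$ branch, then halve $\alpha$ repeatedly using the $\sqrt[3]{\alpha}$ branch) shows that $O(\eps k)$ successful steps suffice to reach $\alpha \le O(1/\eps^3)$; a Chernoff bound over the $\eps k$ rounds then gives the $1-\exp(-\Omega(k^{0.1}))$ probability. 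The missing idea in your proposal is precisely this $\alpha$-dependent improvement rate; with only a fixed $1 - \Theta(1)/k$ improvement per step, $\eps k$ steps cannot bridge the gap from a $\poly(k)$ (let alone $\exp(k^{0.1})$) approximation down to a constant.
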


\subsection{Related Work}

Another variant of local search was analyzed by Kanungo et al. \yrcite{kanungo2004local}:
In each step, try to improve by swapping an existing center with an input point.
Although they showed that this eventually yields a constant approximation, the number of required steps can be very large.

Under the bicriteria optimization setting, Aggarwal et al. \yrcite{aggarwal2009adaptive} and Wei \yrcite{wei2016constant} proved that if one over-samples and runs $D^2$-sampling for $\O(k)$ steps (instead of just $k$), one can get a constant approximation of the $k$-means objective with these $\O(k)$ centers.
We note that a single step of \localsearch{} has almost the same asymptotic running time as over-sampling once using $D^2$-sampling (see \cref{para:runningtime} for details) while enforcing the constraint of \emph{exactly} $k$ centers.
However, their results are stronger in terms of approximation guarantees:
Aggarwal et al. \yrcite{aggarwal2009adaptive} proved that in $\O(k)$ steps one achieves a $4+\eps$ approximation to the optimal cost with constant probability, while Wei \yrcite{wei2016constant} proved that after $\eps k$ more sampling steps one achieves an $\O(1/\eps)$-approximation in expectation.

Other related work include speeding up \kmp{} via approximate sampling \cite{bachem2016fast}, approximating \kmp{} in the streaming model \cite{ackermann2012streamkm++}, and running \kmp{} in a distributed setting \cite{bahmani2012scalable}.

\subsection{Our method, in a nutshell}
Lattanzi and Sohler showed that given any clustering with approximation ratio of at least 500, a single step of \localsearch{} improves the cost by a factor of $1 - 1/(100k)$, with constant probability.
In general, one cannot hope to asymptotically improve their bound\footnote{Consider a $(k-1)$-dimensional simplex with $n/k$ points at each corner and a clustering with all $k$ centers in the same corner.
	Swapping one center to another corner of the simplex improves the solution by a factor of $1 - \Theta(1/k)$. However, we do not expect to get such a solution after running the \kmp{} algorithm.}.
Instead, our improvement comes from structural insights on solutions provided by \cref{alg:kmp} and \cref{alg:localsearch}.
We argue that if we have a bad approximation at any point in time, the next step of \cref{alg:localsearch} drastically improves the cost with a positive constant probability.

To be more specific, consider an optimal clustering $OPT$.
We prove that throughout the course of \cref{alg:localsearch}, most centroids of $OPT$ clusters have a candidate center in the current solution close to it.
Through a chain of technical lemmas, \`{a} la \cite{lattanzi2019better}, we conclude that the new sampled point is close to the centroid of one of the (very few) costly $OPT$ clusters with constant probability.
Then, we argue that there is an existing candidate that can be swapped with the newly sampled point to improve the solution quality substantially.
Putting everything together, we get that the solution improves by a factor of $1 - \Theta(\sqrt[3]{\alpha} /k)$ with constant probability in a single \localsearch{} step, where $\alpha$ is the approximation factor of the current solution.
This improved multiplicative cost reduction suffices to prove our main result.

In \cref{sec:preliminaries}, we introduce notation and crucial definitions, along with several helpful lemmas.
In \cref{sec:walk_through_proof}, we walk the reader through our proof while deferring some lengthy proofs to the supplementary material.

\section{Preliminaries}
\label{sec:preliminaries}

Let $P$ be a set of points in $\mathbb{R}^d$.
For two points $p, q \in \mathbb{R}^d$, let $\Vert p - q \Vert$ be their Euclidean distance.
We denote $C \subseteq P$ as the set of candidate centers and $C^* = OPT$ as the centers of a (fixed) optimal solution, where $|C^*| = k$.
Note that a center $c^* \in C^*$ may \emph{not} be a point from $P$ while all candidates $c \in C$ are actual points from $P$.
For $c^* \in C^*$, the set $Q_{c^*}$ denotes the points in $P$ that $OPT$ assigns to $c^*$.
We define $cost(P,C) = \sum_{p \in P} \min_{c \in C} \Vert p - c \Vert^2$ as the cost of centers $C$, where $cost(P,C^*)$ is the cost of an optimal solution.
When clear from context, we also refer to the optimal cost as $OPT$.
For an arbitrary set of points $Q$, we denote their centroid by $\mu_Q = (1 / |Q|) \cdot \sum_{q \in Q}q$.
Note that $\mu_Q$ may not be a point from $Q$.

For the sake of readability, we drop the subscript $Q$ when there is only one set of points in discussion and we drop braces when describing singleton sets in $cost(\cdot, \cdot)$.
We will also ignore rounding issues as they do not play a critical role asymptotically.

We now define $D^2$-sampling introduced in \kmp{}.

\begin{definition}[$D^2$-sampling]
	\label{def:d2-sampling}
	Given a set $C \subseteq P$ of candidate centers, we sample a point $p \in P$ with probability $\P[p] = cost(p,C) / \sum_{p \in P} cost(p,C)$.
\end{definition}

The following folklore lemma describes an important property of the cost function.
This is analogous to the bias-variance decomposition in machine learning and to the parallel axis theorem in physics \cite{aggarwal2009adaptive}.
As the variable naming suggests, we will use it with $Q$ being an $OPT$ center and $c$ being a candidate center.

\begin{lemma}
	\label{prop:physics_lemma}
	Let $Q \subseteq P$ be a set of points.
	For any point $c \in P$ (possibly not in $Q$),
	\[
		cost(Q, c) = |Q| \cdot \Vert c - \mu_Q \Vert^2 + cost(Q, \mu_Q)
	\]
\end{lemma}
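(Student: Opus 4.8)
The plan is to prove this by the standard ``add and subtract the centroid'' trick, expanding everything in terms of squared Euclidean norms. First I would write out the definition $cost(Q,c) = \sum_{q \in Q} \Vert q - c \Vert^2$ and, inside each summand, decompose the displacement as $q - c = (q - \mu_Q) + (\mu_Q - c)$. Expanding the square of this sum via the inner product gives, for each $q \in Q$, the three terms $\Vert q - \mu_Q \Vert^2$, the cross term $2 \langle q - \mu_Q,\, \mu_Q - c \rangle$, and $\Vert \mu_Q - c \Vert^2$.

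The one point that requires an observation (and the ``main obstacle,'' though it is very mild) is that the cross terms cancel after summation. Summing the middle contributions over $q \in Q$ yields $2 \big\langle \sum_{q \in Q} (q - \mu_Q),\, \mu_Q - c \big\rangle$, and here $\sum_{q \in Q} (q - \mu_Q) = \big( \sum_{q \in Q} q \big) - |Q| \cdot \mu_Q = 0$ directly from the definition $\mu_Q = (1/|Q|) \sum_{q \in Q} q$. Hence the entire cross contribution vanishes regardless of where $c$ lies.

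It then remains to collect the surviving two contributions: $\sum_{q \in Q} \Vert q - \mu_Q \Vert^2 = cost(Q, \mu_Q)$ by definition, and $\sum_{q \in Q} \Vert \mu_Q - c \Vert^2 = |Q| \cdot \Vert \mu_Q - c \Vert^2 = |Q| \cdot \Vert c - \mu_Q \Vert^2$ since the summand is independent of $q$ and the norm is symmetric. Adding these gives exactly $cost(Q,c) = |Q| \cdot \Vert c - \mu_Q \Vert^2 + cost(Q, \mu_Q)$, as claimed. I would remark in passing that the argument used nothing about $c \in P$ or about $\mu_Q$ being a point of $Q$ or $P$, so the identity in fact holds verbatim for every $c \in \mathbb{R}^d$; the restriction to $c \in P$ in the statement is only there because that is the regime in which the lemma is later applied.
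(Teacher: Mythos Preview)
Your proof is correct and is exactly the standard argument; the paper itself does not give a proof at all, treating the lemma as folklore (it merely notes the analogy with the bias--variance decomposition and the parallel axis theorem). Your additional remark that the identity holds for arbitrary $c \in \mathbb{R}^d$ is also correct and worth noting.
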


To have a finer understanding of the cluster structure, we define the notions of \emph{settled} and \emph{approximate} clusters.
Consider an arbitrary set of points $Q \subseteq P$ (e.g. some cluster of $OPT$).
We define
\[
	{R_{Q,\beta} = \{q \in Q: \Vert q - \mu_Q \Vert^2 \leq (\beta / |Q| ) \cdot cost(Q, \mu_Q)\}}
\]
as the subset of points in $Q$ that are within a certain radius from $\mu_Q$ (i.e. ``close'' with respect to $\beta$).
As $\beta$ decreases, the condition becomes stricter and the set $R_{Q,\beta}$ shrinks.

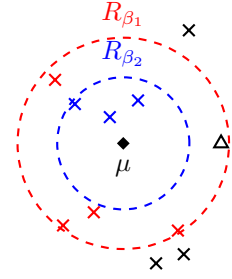
\begin{figure}[t]
	\centering
	\begin{tikzpicture}
		\draw[thick] (-0.644, 0.523) node[cross=3pt, blue] {};
		\draw[thick] (-0.801, -1.092) node[cross=3pt, red] {};
		\draw[thick] (0.436, -1.596) node[cross=3pt, black] {};
		\draw[thick] (-0.392, -0.917) node[cross=3pt, red] {};
		\draw[thick] (0.196, 0.571) node[cross=3pt, blue] {};
		\draw[thick] (0.804, -1.475) node[cross=3pt, black] {};
		\draw[thick] (0.872, 1.504) node[cross=3pt, black] {};
		\draw[thick] (-0.907, 0.845) node[cross=3pt, red] {};
		\draw[thick] (-0.179, 0.349) node[cross=3pt, blue] {};
		\draw[thick] (0.726, -1.16) node[cross=3pt, red] {};

		\node[draw, thick, regular polygon,regular polygon sides=3, minimum size=6pt, inner sep=0pt] at (1.3,0) {};

		\draw (0,0) node[fill=black, diamond, inner sep=0pt, minimum size=5pt, label=below:\textcolor{black}{$\mu$}] {};
		\node[draw, thick, dashed, red, circle, minimum size=80pt, label=above:\textcolor{red}{$R_{\beta_1}$}] at (0,0) {};
		\node[draw, thick, dashed, blue, circle, minimum size=50pt, label=above:\textcolor{blue}{$R_{\beta_2}$}] at (0,0) {};
	\end{tikzpicture}
	\caption{
		$\times$ are points in set $Q$ with centroid $\mu$ (which may \emph{not} be a point from $Q$).
		For $\beta_1 > \beta_2$, $R_{\beta_2} \subseteq R_{\beta_1} \subseteq Q$.
		$\triangle$ represents a candidate center $c \in C$ that does \emph{not} belong to $Q$.
		Since $c \not\in Q$, cluster $Q$ is \emph{not} $\beta_1$-settled even though $\Vert c - \mu_Q \Vert^2 \leq (\beta_1 / |Q|) \cdot cost(Q, \mu_Q)$.
	}
	\label{fig:example-Q}
\end{figure}

\begin{definition}[$\beta$-settled]
	\label{def:settled}
	An $OPT$ cluster $Q$ is $\beta$-settled if $R_{\beta} \cap C \neq \emptyset$.
	That is, there is a candidate center $c \in Q$ with distance at most $(\beta / |Q|) \cdot cost(Q, \mu_Q)$ from $\mu_Q$.
\end{definition}

\begin{definition}[$\alpha$-approximate]
	\label{def:approximate}
	An $OPT$ cluster $Q$ is $\alpha$-approximate if $cost(Q,C) \leq \alpha \cdot cost(Q,\mu)$.
\end{definition}

Intuitively, a $\beta$-settled cluster $Q$ has small $cost(Q,C)$.
As settled-ness requires a candidate $c \in C$ to belong to cluster $Q$, an unsettled cluster $Q$ could have small $cost(Q,C)$.
See \cref{fig:example-Q} for an illustration of $Q$, $\mu_Q$, $R_{Q,\beta}$ and $\beta$-settled.
We now relate the definitions of settled and approximate.

\begin{lemma}
	\label{prop:beta-settled}
	Suppose $Q$ is a cluster of $OPT$ that is $\beta$-settled.
	Then, $cost(Q,C) \le (\beta+1) \cdot cost(Q, \mu)$.
	In other words, $\beta$-settled implies $(\beta+1)$-approximate.
\end{lemma}
\begin{proof}
	For any $\beta$-settled cluster $Q$, there is some candidate center $c \in C$ in $R_\beta$, so
	\begin{align*}
		 & cost(Q,C)
		\leq cost(Q,c)
		= |Q| \cdot \Vert c - \mu \Vert^2 + cost(Q, \mu) \\
		 & \leq (\beta+1) \cdot cost(Q,\mu)
		= (\beta+1) \cdot cost(Q,C^*)
	\end{align*}
\end{proof}

It is also useful to consider the contrapositive of \cref{prop:beta-settled}.

\begin{corollary}
	\label{cor:mod-lem6}
	Let $Q$ be a cluster of $OPT$.
	If $cost(Q,C) > (\beta + 1) \cdot cost(Q,\mu)$, then $\Vert c - \mu \Vert^2 > (\beta / |Q|) \cdot cost(Q,\mu)$ for any candidate center $c \in C$.
	That is, $Q$ is $\beta$-unsettled.
\end{corollary}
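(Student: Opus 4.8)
The statement to prove is Corollary~\ref{cor:mod-lem6}, which is explicitly labeled as the contrapositive of Lemma~\ref{prop:beta-settled}. So the plan is essentially to unwind the contrapositive and invoke the definitions.

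\textbf{Plan.} The plan is to argue by contraposition directly from \cref{prop:beta-settled}, or equivalently from \cref{def:settled}. Suppose, for the sake of contradiction, that there exists a candidate center $c \in C$ with $\Vert c - \mu \Vert^2 \leq (\beta / |Q|) \cdot cost(Q,\mu)$. Since all candidates in $C$ are points of $P$ and the cluster $Q$ partitions off a subset of $P$, one subtlety to watch (compare \cref{fig:example-Q}) is that settled-ness additionally requires $c \in Q$; however, the inequality $cost(Q,C) \le cost(Q,c)$ used in the proof of \cref{prop:beta-settled} only needs $c \in C$, not $c \in Q$, so this subtlety does not actually arise here. Concretely, I would write: $cost(Q,C) \le cost(Q,c) = |Q| \cdot \Vert c - \mu \Vert^2 + cost(Q,\mu) \le \beta \cdot cost(Q,\mu) + cost(Q,\mu) = (\beta+1)\cdot cost(Q,\mu)$, where the equality is \cref{prop:physics_lemma} (the parallel axis theorem). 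This contradicts the hypothesis $cost(Q,C) > (\beta+1)\cdot cost(Q,\mu)$, so no such $c$ exists, i.e., every candidate $c \in C$ satisfies $\Vert c - \mu \Vert^2 > (\beta / |Q|)\cdot cost(Q,\mu)$, which is exactly the conclusion that $Q$ is $\beta$-unsettled.

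\textbf{Key steps, in order.} First, assume toward a contradiction the negation of the conclusion: some $c \in C$ lies within the stated radius of $\mu$. Second, bound $cost(Q,C) \le cost(Q,c)$ since $C$ contains $c$ and $cost(Q,\cdot)$ takes a minimum over the center set. Third, apply \cref{prop:physics_lemma} to rewrite $cost(Q,c)$ as $|Q|\cdot\Vert c-\mu\Vert^2 + cost(Q,\mu)$. Fourth, substitute the assumed radius bound to get $cost(Q,C) \le (\beta+1)\cdot cost(Q,\mu)$, contradicting the hypothesis. Conclude that the assumption was false, which establishes the claim for every $c \in C$.

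\textbf{Main obstacle.} There is essentially no obstacle: the statement is a one-line logical contrapositive of an already-proved lemma, and the only thing to be careful about is making explicit that the chain of inequalities in \cref{prop:beta-settled} only requires $c \in C$ (not $c \in Q$), so the "$\beta$-settled needs $c \in Q$" caveat illustrated in \cref{fig:example-Q} does not weaken the corollary. I would simply note that the contrapositive is immediate and optionally reproduce the two-line computation for completeness.
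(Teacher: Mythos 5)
Your proof is correct and matches the paper's intent, which gives no explicit proof and simply labels the corollary as a contrapositive of \cref{prop:beta-settled}. You have in fact been more careful than the paper's own wording: the stated conclusion (``for any $c \in C$'', not merely $c \in C \cap Q$) is formally stronger than the literal contrapositive of the lemma, and you correctly observe that it nonetheless follows because the chain $cost(Q,C) \le cost(Q,c) = |Q|\cdot\Vert c-\mu\Vert^2 + cost(Q,\mu)$ in the proof of \cref{prop:beta-settled} only requires $c \in C$.
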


In our analysis, we will prove statements about clusters being settled for general values of $\beta$.
However, there are uncountably many possible $\beta$'s and therefore we cannot do a union bound over all possible choices of $\beta$.
Using a similar idea to $\eps$-net arguments, we will discretize the set of $\beta$'s into a sufficiently small finite set of \emph{legal} values.

\begin{definition}[Legal $\beta$ values]
	A parameter $\beta$ is \emph{legal} if $\beta \in \mathcal{B} = \{ 2^{i} : \, i \in \{3, 4, \dots, 0.3 \cdot \log k \} \}$.
	In particular, this implies that all legal $\beta$ are at least $8$ and at most $k^{0.3}$.
\end{definition}

\section{Analysis}
\label{sec:walk_through_proof}

In this section, we present the key ideas of our proof while deferring some
details to the supplementary material.

Following the proof outline of Lattanzi and Sohler \yrcite{lattanzi2019better}, we perform a more intricate analysis of \localsearch{}.
Their key lemma shows that, with constant probability, the cost of the solution decreases by a factor of $1 - 1/(100k)$ after one local search step.

\paragraph{Lemma 3 in \cite{lattanzi2019better}}
Let $P$ be a set of points and $C$ be a set of centers with $cost(P,C) > 500\: OPT$.
Denote the updated centers by $C' = \localsearch(P,C)$.
Then, with probability $1/1000$, $cost(P,C') \le \left( 1 - 1/(100k) \right) cost(P,C)$.

The above lemma implies that we expect the cost of the current solution to drop by a constant factor after $\O(k)$ \localsearch{} steps (unless we already have a constant approximation of the optimum).
Since we start with a solution that is an $\O(\log k)$-approximation in expectation, we expect that after $\O(k \log\log k)$ iterations, the cost of our solution drops to a constant.
This yields the main theorem of \cite{lattanzi2019better}.

\paragraph{Theorem 1 in \cite{lattanzi2019better}}
Let $P$ be a set of points and $C$ be the output of \kmp{} followed by at least $100000\: k \log \log k$ many local search steps.
Then, we have $\E[cost(P, C)] \in \O(cost(P, C^*))$.
The running time of the algorithm is $\O(dnk^2 \log \log k)$.

Our improvements rely on the following structural observation:
After running \kmp{}, most of the clusters of the optimal solution are already ``well approximated'' with high probability in $k$.

\subsection{Structural analysis}

In this subsection, we study the event of sampling a point from a $\beta$-unsettled cluster and making it $\beta$-settled.
This allows us to prove concentration results about the number of $\beta$-settled clusters, which we will use in the next subsection.

Suppose $\alpha$ is the current approximation factor.
The result below states that with good probability, the new sampled point is from a cluster that is currently badly approximated.

\begin{restatable}{lemma}{lemsamplefromunsettled}
	\label{prop:sampled_point_is_from_bad_cluster}
	Suppose that $cost(P,C) = \alpha \cdot cost(P,C^*)$ and we $D^2$-sample a point $p \in P$.
	Consider some fixed $\beta \geq 1$.
	Then, with probability at least $1 - \beta / \alpha$, the sampled point $p$ is from a cluster $Q$ with $cost(Q,C) \ge \beta \cdot cost(Q,\mu_Q)$.
\end{restatable}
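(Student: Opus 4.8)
The plan is to bound the probability that the $D^2$-sampled point $p$ lands in a ``good'' cluster, i.e.\ an $OPT$ cluster $Q$ with $cost(Q,C) < \beta \cdot cost(Q,\mu_Q)$, and show this probability is at most $\beta/\alpha$. Since $D^2$-sampling picks $p$ with probability $cost(p,C)/cost(P,C)$, the probability that $p$ falls in the union $G$ of all good clusters is exactly $cost(G,C)/cost(P,C) = cost(G,C)/(\alpha \cdot cost(P,C^*))$. So it suffices to show $cost(G,C) \le \beta \cdot cost(P,C^*)$.

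The key step is to partition $P$ into the $OPT$ clusters $\{Q_{c^*}\}_{c^* \in C^*}$ and sum the defining inequality over the good clusters. For each good cluster $Q$ we have, by definition, $cost(Q,C) < \beta \cdot cost(Q,\mu_Q) = \beta \cdot cost(Q, c^*)$ where $c^*$ is the $OPT$ center of $Q$ (here I use that $cost(Q,\mu_Q)$ is the contribution of $Q$ to the optimal cost). Summing over all good clusters $Q \subseteq G$ gives
\[
	cost(G,C) = \sum_{Q \subseteq G} cost(Q,C) < \beta \sum_{Q \subseteq G} cost(Q, \mu_Q) \le \beta \cdot cost(P, C^*),
\]
where the last inequality just drops the non-good clusters from the sum (all terms are nonnegative). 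Combining with the first paragraph, the probability that $p$ is from a good cluster is at most $\beta \cdot cost(P,C^*) / (\alpha \cdot cost(P,C^*)) = \beta/\alpha$, so with probability at least $1 - \beta/\alpha$ the sampled point is from a cluster $Q$ with $cost(Q,C) \ge \beta \cdot cost(Q,\mu_Q)$.

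This argument is essentially a one-line union/averaging bound, so I don't anticipate a real obstacle; the only subtlety to get right is the bookkeeping identity $\sum_{c^* \in C^*} cost(Q_{c^*}, \mu_{Q_{c^*}}) = cost(P, C^*)$, which holds because $\mu_{Q_{c^*}}$ is the centroid of $Q_{c^*}$ and hence minimizes $cost(Q_{c^*}, \cdot)$ — but actually we need equality with $cost(Q_{c^*}, c^*)$, which follows since in the optimal solution each point in $Q_{c^*}$ is assigned to $c^*$ and, by \cref{prop:physics_lemma} applied with the true centroid, the centroid is the optimal single center so $c^*$ may be taken to be $\mu_{Q_{c^*}}$ without loss of generality (or one simply uses $cost(Q,\mu_Q) \le cost(Q,c^*)$, which is the direction needed). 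I would state it using $cost(Q,\mu_Q) \le cost(Q, C^* )$ restricted to $Q$, which is all that's required and avoids any assumption about $OPT$ centers being centroids.
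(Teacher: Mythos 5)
Your proof is correct and follows essentially the same route as the paper's: identify the union $\widetilde{Q}$ of clusters violating the desired property, bound $cost(\widetilde{Q},C)$ by $\beta \cdot cost(P,C^*)$, and divide by $cost(P,C) = \alpha \cdot cost(P,C^*)$. You simply make explicit the intermediate summation step and the bookkeeping identity that the paper's one-line proof leaves implicit; your remark that only $cost(Q,\mu_Q) \le cost(Q,c^*)$ is needed (rather than equality) is a fine way to sidestep the fact that optimal $k$-means centers are centroids anyway.
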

\begin{proof}
	Let $\widetilde{Q}$ be the union of all clusters $Q$ such that $cost(Q,C) < \beta \cdot cost(Q,\mu_Q)$.
	By definition, $D^2$-sampling will pick a point from $\widetilde{Q}$ with probability at most $cost(\widetilde{Q},C) / cost(P,C) \leq \beta / \alpha$.
\end{proof}

Similar to most work\footnote{Cf. Lemma 2 in \cite{arthur2007k}, Lemma 5 of \cite{aggarwal2009adaptive}, Lemma 6 of \cite{lattanzi2019better}.} on \kmp{}, we need a sampling lemma stating that if we sample a point within a cluster $Q$ (according to $D^2$ weights of points in $Q$), then the sampled point will be relatively close to $\mu_Q$ with good probability.

\begin{restatable}{lemma}{lemsampleunsettled}
	\label{lem:sampled_point_is_from_unsettled_cluster}
	Suppose that $Q$ is an $OPT$ cluster with $cost(Q, C) \ge \beta \cdot cost(Q, \mu_Q)$ for some $\beta \ge 4$ and we $D^2$-sample a point $p \in Q$.
	Then, with probability at least $1 - 6 / \sqrt{\beta}$, $Q$ becomes $(\beta - 1)$-settled.
	That is, $\Vert p - \mu_Q \Vert^2 \le \left( (\beta-1) / |Q| \right) \cdot cost(Q,\mu_Q)$ and $cost(Q, p) \le \beta \cdot cost(Q, \mu_Q)$.
\end{restatable}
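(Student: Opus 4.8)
The plan is to combine the "parallel axis" decomposition (Lemma 2) with a Markov-type bound on the $D^2$-sampling distribution restricted to $Q$. Fix the $OPT$ cluster $Q$ with $cost(Q,C) \ge \beta\cdot cost(Q,\mu_Q)$, and write $r_q^2 = \Vert q - \mu_Q\Vert^2$ and $\delta_q^2 = cost(q,C) = \min_{c\in C}\Vert q - c\Vert^2$ for $q \in Q$. We want to bound the probability, under $\P[q] = \delta_q^2 / cost(Q,C)$, that a sampled $p$ lands with $r_p^2 > \left((\beta-1)/|Q|\right)\cdot cost(Q,\mu_Q)$; call this the "bad" event $B$. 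The difficulty is that the sampling weight $\delta_q^2$ is governed by distances to current centers, not by $r_q^2$, so we cannot directly apply Markov to $r_q^2$ under this law.

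The key trick (as in the cited analogues) is to relate $\delta_q$ and $r_q$ via a worst case: for any $q$, let $c$ be the current center closest to $\mu_Q$, so $\delta_q \le \Vert q - c\Vert \le r_q + \Vert \mu_Q - c\Vert =: r_q + D$. Hence $\delta_q^2 \le 2r_q^2 + 2D^2$. Summing over $Q$ and using Lemma 2 gives $cost(Q,C) \le 2\,cost(Q,\mu_Q) + 2|Q|D^2$; combined with the hypothesis $cost(Q,C)\ge\beta\,cost(Q,\mu_Q)$ this forces $|Q|D^2 \ge \tfrac{\beta-2}{2}cost(Q,\mu_Q)$, i.e. $D$ is reasonably large — so the current centers are all fairly far from $\mu_Q$, which is what makes $D^2$-sampling within $Q$ informative about $r_q$. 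Now I would split the probability of $B$: first, by Markov on $r_q^2$ with respect to the \emph{uniform-ish} measure induced by $cost(Q,\mu_Q) = \sum_q r_q^2$, the total mass $\sum_{q \in B} r_q^2$ is at most $cost(Q,\mu_Q)$, but more importantly the number-weighted statement: $\P[B] = \sum_{q\in B}\delta_q^2 / cost(Q,C)$, and for $q \in B$ we have $r_q^2 > (\beta-1)/|Q|\cdot cost(Q,\mu_Q)$, which bounds $|B|$ by $|Q|/(\beta-1)$. Then $\sum_{q\in B}\delta_q^2 \le \sum_{q\in B}(2r_q^2 + 2D^2) \le 2\,cost(Q,\mu_Q) + 2|B|D^2 \le 2\,cost(Q,\mu_Q) + \tfrac{2}{\beta-1}|Q|D^2$.

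To finish, divide by $cost(Q,C) \ge \beta\,cost(Q,\mu_Q)$ and also use $cost(Q,C) \ge |Q|D^2 \cdot(\text{something})$ — more precisely, one wants a lower bound on $cost(Q,C)$ in terms of $|Q|D^2$ as well. Here I would use that every point's nearest center is at distance $\ge D - r_q$ is too weak; instead use the reverse of the earlier inequality, $\delta_q^2 \ge \tfrac12 D^2 - r_q^2$ (from $D \le \delta_q + r_q$), sum to get $cost(Q,C) \ge \tfrac12|Q|D^2 - cost(Q,\mu_Q)$, hence $|Q|D^2 \le 2\,cost(Q,C) + 2\,cost(Q,\mu_Q) \le 2(1 + 1/\beta)cost(Q,C) \le 3\,cost(Q,C)$. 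Plugging in: $\P[B] \le \frac{2\,cost(Q,\mu_Q)}{cost(Q,C)} + \frac{2}{\beta-1}\cdot\frac{|Q|D^2}{cost(Q,C)} \le \frac{2}{\beta} + \frac{6}{\beta-1} \le \frac{C}{\sqrt\beta}$ for a suitable constant, which for $\beta \ge 4$ and after tightening the constants gives $6/\sqrt\beta$. The second conclusion, $cost(Q,p) \le \beta\,cost(Q,\mu_Q)$, then follows immediately from Lemma 2: $cost(Q,p) = |Q|\,r_p^2 + cost(Q,\mu_Q) \le (\beta-1)\,cost(Q,\mu_Q) + cost(Q,\mu_Q) = \beta\,cost(Q,\mu_Q)$ on the complement of $B$.

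The main obstacle I anticipate is getting the constant down to exactly $6/\sqrt\beta$ rather than $O(1/\beta)$ or $O(1/\sqrt\beta)$ with a larger constant: the bound above actually looks like $O(1/\beta)$, which is stronger, so the real subtlety must be that the crude inequalities $\delta_q \le r_q + D$ are not tight enough when $D$ itself varies over the choice of nearest center per point (different $q$ may have different nearest centers), so one cannot use a single $D$. The honest argument probably replaces $D$ by $\Vert\mu_Q - c_0\Vert$ for the \emph{single} center $c_0$ nearest to $\mu_Q$, uses $cost(Q,C) \le cost(Q,c_0) = |Q|\Vert\mu_Q - c_0\Vert^2 + cost(Q,\mu_Q)$ for the upper direction, and then the $\sqrt\beta$ (rather than $\beta$) loss enters because one must trade off the contribution of points with moderately large $r_q^2$ — those with $r_q^2$ between $\Theta(1/\sqrt\beta)\cdot cost(Q,\mu_Q)/|Q|$ and $(\beta-1)/|Q|\cdot cost(Q,\mu_Q)$ — against their sampling weight, giving a Cauchy–Schwarz-type $\sqrt\beta$ rather than $\beta$. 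I would set up that two-regime split carefully and track constants, which is the routine-but-delicate part.
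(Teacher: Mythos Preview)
Your approach is correct and actually yields a stronger bound than the paper's; the self-doubt in your final paragraph is misplaced. A single fixed $D=\Vert\mu_Q-c_0\Vert$, with $c_0$ the center nearest to $\mu_Q$, really does support both inequalities you need: $\delta_q\le\Vert q-c_0\Vert\le r_q+D$ (upper bound, using $c_0$ as one candidate for $q$) and $D\le\Vert\mu_Q-c_q\Vert\le r_q+\delta_q$ (lower bound, since every center---in particular $q$'s nearest center $c_q$---is at distance $\ge D$ from $\mu_Q$). With these, your chain gives $\P[B]\le 2/\beta+6/(\beta-1)$, which is $O(1/\beta)$ and is easily seen to be at most $6/\sqrt\beta$ for all $\beta\ge 4$; the final conclusion $cost(Q,p)\le\beta\,cost(Q,\mu_Q)$ then follows from \cref{prop:physics_lemma} exactly as you wrote.

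The paper takes a different route: it lower-bounds the $D^2$-weight of the complementary ``inner'' set $\Qin$ rather than upper-bounding the bad set. Concretely, it observes that no center lies within distance $\sqrt{(\beta'-2)\,cost(Q,\mu_Q)/|Q|}$ of $\mu_Q$ (else the exact ratio $\beta'$ would be contradicted), so for $q_i\in\Qin$ one has $\sqrt{cost(q_i,C)}\ge\sqrt{(\beta'-2)\,cost(Q,\mu_Q)/|Q|}-d_i$; it then squares, sums, and controls the cross term $2\sum_i d_i\sqrt{\cdot}$ via Cauchy--Schwarz on $\sum_i d_i$. That Cauchy--Schwarz step is exactly where the $\sqrt\beta$ (rather than $\beta$) enters. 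So your direct upper bound on $cost(B,C)$ is both cleaner and sharper; the paper's argument is a little more indirect and gives only the weaker $6/\sqrt\beta$, which is nonetheless all that is used downstream.
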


\begin{proof}
	Let $\beta' \ge \beta$ be the \emph{exact} approximation factor.
	i.e. $cost(Q,C) = \beta' \cdot cost(Q,\mu)$.
	We define sets $\Qin$ and $P'_{\textrm{in}}$:

	\begin{align*}
		\Qin             & = \left\{ q \in Q : \Vert q - \mu_Q \Vert \leq \sqrt{\frac{\beta - 2}{|Q|}cost(Q,\mu_Q)} \right\}  \\
		P'_{\textrm{in}} & = \left\{ p \in P : \Vert p - \mu_Q \Vert \leq \sqrt{\frac{\beta' - 2}{|Q|}cost(Q,\mu_Q)} \right\}
	\end{align*}

	By definition, we have  $\Qin \subseteq Q$ and $\Qin \subseteq P'_{\textrm{in}}$.
	However, $P'_{\textrm{in}} \subseteq Q$ does not hold in general.

	\cref{prop:physics_lemma} tells us that $P'_{\textrm{in}} \cap C = \emptyset$.
	Otherwise $\beta' \cdot cost(Q, \mu_Q) = cost(Q, C) \le (\beta'-1) \cdot cost(Q, \mu_Q)$, which is a contradiction.
	Furthermore, it holds that $|Q \setminus \Qin| \le |Q|/ (\beta-2)$.
	Otherwise $cost(Q, \mu_Q) > \left( |Q| / (\beta-2) \right) \cdot \left( (\beta - 2) / |Q| \right) \cdot cost(Q,\mu_Q)$, which is a contradiction.
	Hence, $|\Qin| \ge (1 - 1 / (\beta-2)) \cdot |Q|$.

	Let $d_i = \Vert q_i - \mu_Q \Vert$ be the distance of the $i$-th point of $\Qin$ from $\mu_Q$, so $\sum_{i=1}^{|\Qin|} d_i^2 \le cost(Q, \mu_Q)$.
	By the Cauchy-Schwarz inequality, we have
	$
		\sum_{i=1}^{|\Qin|} d_i \le \sqrt{|\Qin| \cdot \sum_{i=1}^{|\Qin|} d_i^2} \le \sqrt{|Q| \cdot cost(Q,\mu_Q)}
	$.
	Since $P'_{\textrm{in}} \cap C = \emptyset$, triangle inequality tells us that $\sqrt{cost(q_i, C)}
		\ge \sqrt{cost(\mu_Q, C)} - \sqrt{cost(q_i, \mu_Q)}
		\ge \sqrt{\left( (\beta' - 2) / |Q| \right) \cdot cost(Q,\mu_Q)} - d_i$ for each point $q_i \in \Qin$.
	Thus,
	\begin{align*}
		    & \; cost(\Qin, C) = \sum_{i=1}^{|\Qin|} cost(q_i, C)                                                                          \\
		\ge & \; \sum_{i=1}^{|\Qin|} \left( \sqrt{\frac{\beta'-2}{|Q|}cost(Q,\mu_Q)} - d_i \right)^2                                       \\
		\ge & \; \sum_{i=1}^{|\Qin|} \frac{\beta'-2}{|Q|}cost(Q,\mu_Q) - 2 d_i \sqrt{\frac{\beta'-2}{|Q|}cost(Q,\mu_Q)}                    \\
		=   & \; \frac{|\Qin|}{|Q|}(\beta'-2) cost(Q,\mu_Q)                                                                                \\
		    & \; - 2 \sqrt{\frac{\beta'-2}{|Q|}cost(Q,\mu_Q)} \cdot \sum_{i=1}^{|\Qin|} d_i                                                \\
		\ge & \; \left( 1 - \frac{1}{\beta-2} \right) \left( \beta'-2 \right) cost(Q,\mu_Q)                                                \\
		    & \; - 2 \sqrt{\frac{\beta'}{|Q|}cost(Q,\mu_Q)} \cdot \sqrt{|Q|cost(Q,\mu_Q)}                                                  \\
		=   & \; \left( \left( 1 - \frac{1}{\beta-2} \right) \left( \beta'-2 \right) - 2 \sqrt{\beta'} \right) cost(Q,\mu_Q)               \\
		=   & \; \left( \left( 1 - \frac{1}{\beta-2} \right) \left( 1-\frac{2}{\beta'} \right) - \frac{2}{\sqrt{\beta'}} \right) cost(Q,C) \\
		\ge & \; \left( 1 - \frac{2+2+2}{\sqrt{\beta}} \right) cost(Q,C)                                                                   \\
		=   & \; \left( 1 - \frac{6}{\sqrt{\beta}} \right) cost(Q,C)
	\end{align*}
	Hence, the probability that the sampled point $p$ is taken from $\Qin$ is at least $1 - 6 / \sqrt{\beta}$.
	Having sampled a point $p \in Q$ with $\Vert p - \mu_Q \Vert^2 \le ((\beta - 1) / |Q|) \cdot cost(Q, \mu_Q)$, \cref{prop:physics_lemma} tells us that the cost of cluster $Q$ is at most $\beta \cdot cost(Q, \mu_Q)$.
\end{proof}

\begin{corollary}
	\label{cor:I_like_Sebastians_chocolate}
	Fix $\alpha \geq 10$ such that $cost(P,C) = \alpha \cdot OPT$ and let $1 < \beta \le \alpha^{2/3}$.
	Suppose that we $D^2$-sample a new point $p \in P$.
	Then, with probability at least $1 - 8 / \sqrt{\beta}$, the sampled point $p$ is from a $\beta$-unsettled cluster and this cluster becomes $\beta$-settled.
\end{corollary}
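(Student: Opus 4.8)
The plan is to obtain \cref{cor:I_like_Sebastians_chocolate} by chaining \cref{prop:sampled_point_is_from_bad_cluster} and \cref{lem:sampled_point_is_from_unsettled_cluster} with the shifted parameter $\beta+1$, and then to collapse the two resulting error terms into a single $8/\sqrt{\beta}$ using the hypothesis $\beta \le \alpha^{2/3}$.

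First, note the statement is vacuous unless $\beta > 64$, since otherwise $1 - 8/\sqrt{\beta} \le 0$; so we may assume $\beta > 64$, and in particular $\beta + 1 \ge 4$. Apply \cref{prop:sampled_point_is_from_bad_cluster} with the fixed value $\beta + 1 \ge 1$: with probability at least $1 - (\beta+1)/\alpha$, the $D^2$-sampled point $p$ lies in an $OPT$ cluster $Q$ with $cost(Q,C) \ge (\beta+1)\cdot cost(Q,\mu_Q)$. By \cref{cor:mod-lem6} (the contrapositive of \cref{prop:beta-settled}), every such cluster is $\beta$-unsettled, so on this event the first half of the conclusion already holds.

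Next, condition on the event that $p$ falls into one fixed such cluster $Q$. Because $D^2$-sampling picks each point of $P$ with probability proportional to $cost(\cdot,C)$, conditioning on $p \in Q$ makes $p$ distributed exactly as a $D^2$-sample within $Q$. Hence \cref{lem:sampled_point_is_from_unsettled_cluster}, applied to $Q$ with parameter $\beta + 1 \ge 4$ (valid since $cost(Q,C) \ge (\beta+1)\cdot cost(Q,\mu_Q)$), shows that with probability at least $1 - 6/\sqrt{\beta+1} \ge 1 - 6/\sqrt{\beta}$ the cluster $Q$ becomes $(\beta+1)-1 = \beta$-settled; indeed the sampled $p$ then satisfies $\Vert p - \mu_Q \Vert^2 \le (\beta/|Q|)\cdot cost(Q,\mu_Q)$ with $p \in Q$, so $p \in R_{Q,\beta} \cap (C \cup \{p\})$. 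This conditional bound is uniform over all eligible clusters $Q$, hence it also holds conditionally on the event of the previous paragraph.

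Finally, a union bound over the two failure events yields total failure probability at most $(\beta+1)/\alpha + 6/\sqrt{\beta}$. Since $\beta > 1$ gives $\beta+1 \le 2\beta$, and $\beta \le \alpha^{2/3}$ gives both $2\beta/\alpha \le 2\alpha^{2/3}/\alpha = 2\alpha^{-1/3}$ and $\sqrt{\beta} \le \alpha^{1/3}$, we get $(\beta+1)/\alpha \le 2\alpha^{-1/3} \le 2/\sqrt{\beta}$, so the total is at most $2/\sqrt{\beta} + 6/\sqrt{\beta} = 8/\sqrt{\beta}$, as claimed. The only genuine subtlety is the conditioning step — verifying that a global $D^2$-sample restricted to a cluster is a $D^2$-sample within that cluster — together with keeping the $\pm 1$ shifts consistent: feeding $\beta+1$ into \cref{prop:sampled_point_is_from_bad_cluster} is exactly what \cref{cor:mod-lem6} needs to certify $\beta$-unsettledness, and running \cref{lem:sampled_point_is_from_unsettled_cluster} with parameter $\beta+1$ is exactly what yields $\beta$-settledness (a larger choice would not, since $(\beta'-1)$-settled implies $\beta$-settled \emph{only} when $\beta' \le \beta+1$). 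The remaining arithmetic is routine given $\beta \le \alpha^{2/3}$.
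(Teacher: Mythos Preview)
Your proof is correct and essentially identical to the paper's: both chain \cref{prop:sampled_point_is_from_bad_cluster} with \cref{lem:sampled_point_is_from_unsettled_cluster} and collapse the two error terms into $8/\sqrt{\beta}$ using $\beta \le \alpha^{2/3}$. The only cosmetic difference is that the paper feeds $2\beta$ (rather than your $\beta+1$) into \cref{prop:sampled_point_is_from_bad_cluster}, which cleanly yields the \emph{strict} inequality $cost(Q,C) > (\beta+1)\cdot cost(Q,\mu_Q)$ that \cref{cor:mod-lem6} is stated with --- a harmless mismatch in your version that any slightly larger parameter would resolve.
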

\begin{proof}
	\cref{prop:sampled_point_is_from_bad_cluster} tells us that, with probability at least $1 - (2 \beta) / \alpha$, we sample from an $OPT$ cluster $Q$ with $cost(Q,C) \ge (2 \beta) \cdot cost(Q,\mu_Q)$.
	As $cost(Q,C) > ( \beta + 1) \cdot cost(Q,\mu_Q)$, \cref{cor:mod-lem6} implies that $Q$ is $\beta$-unsettled.
	According to \cref{lem:sampled_point_is_from_unsettled_cluster}, $Q$ becomes $\beta$-settled with probability at least $1 - 6 / \sqrt{\beta + 1} \ge 1 - 6 / \sqrt{\beta}$.
	As $\beta \leq \alpha^{2/3}$, the probability of the first event is at least
	$1 - (2 \beta) / \alpha \ge 1 - (2 \beta) / \beta ^{3/2} = 1 - 2 / \sqrt{\beta}$.
	Thus, the joint event of sampling from a $\beta$-unsettled cluster and making it $\beta$-settled happens with probability at least $1 - 8 / \sqrt{\beta}$.
\end{proof}

We can now use \Cref{cor:I_like_Sebastians_chocolate}, together with a Chernoff Bound, to upper-bound the number of $\beta$-unsettled clusters.
First, we show that with high probability, \kmp{} leaves only a small number of clusters $\beta$-unsettled for every legal $\beta \le \alpha^{2/3}$.
Then, we show that this property is maintained throughout the course of the local search.

\begin{restatable}{lemma}{lemqualitygood}
	\label{lem:quality_deteriorates_slowly}
	After running \kmp{} (for $k$ steps) and $\ell \le k$ steps of \localsearch{}, let $C$ denote the set of candidate centers and $\alpha \geq 1$ be the approximation factor.
	Then, with probability at least $1 - \exp(-\Omega(k^{0.1}))$, there are at most $\left( 30 k \right) / \sqrt{\beta}$ clusters that are $\beta$-unsettled, for any legal $\beta \leq \alpha^{2/3}$.
\end{restatable}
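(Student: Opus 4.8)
The plan is, for each legal $\beta$, to control the number $u_t$ of $\beta$-unsettled $OPT$ clusters after $t$ sampling steps of the combined process (the $k$ steps of \cref{alg:kmp} followed by the $\ell\le k$ steps of \cref{alg:localsearch}), and show it never exceeds $30k/\sqrt{\beta}$ with probability $1-\exp(-\Omega(k^{0.1}))$; a union bound over the at most $0.3\log k$ legal values of $\beta$ then finishes. The key structural fact I would exploit is that $cost(P,C_t)$ is non-increasing throughout both phases (adding a center in \cref{alg:kmp} cannot increase the cost, and \cref{alg:localsearch} performs a swap only when the cost strictly decreases). Hence, if $\alpha$ denotes the final approximation factor, then $cost(P,C_t)\ge\alpha\cdot OPT$ at every step $t$, so on the event $\{\beta\le\alpha^{2/3}\}$ the hypothesis $1<\beta\le(cost(P,C_t)/OPT)^{2/3}$ of \cref{cor:I_like_Sebastians_chocolate} holds at every $D^2$-sampling step (and $\beta\ge 8$ forces the current factor above $10$, so the other hypothesis is automatic). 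To avoid conditioning on the future event $\{\beta\le\alpha^{2/3}\}$, I would instead work up to the stopping time at which $cost(P,C_t)$ first drops below $\beta^{3/2}\,OPT$ (equivalently, truncate the indicators below), which changes nothing on that event.

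\textbf{Phase 1 (\kmp).} Since this phase only adds centers, settled-ness is monotone: once $R_\beta\cap C\neq\emptyset$ it stays so. At each of the $k-1$ $D^2$-sampling steps, \cref{cor:I_like_Sebastians_chocolate} gives probability at least $1-8/\sqrt\beta$ that we sample from a currently $\beta$-unsettled cluster and make it $\beta$-settled; call this a \emph{success}. By monotonicity, distinct successes settle distinct clusters that remain settled, so $u$ at the end of Phase 1 is at most $k$ minus the number of successes. The success indicators have conditional probability (given the history) at least $1-8/\sqrt\beta$, so their sum stochastically dominates $\mathrm{Bin}(k-1,\,1-8/\sqrt\beta)$; a Hoeffding bound yields at least $(k-1)(1-8/\sqrt\beta)-k^{0.55}$ successes with probability $1-\exp(-\Omega(k^{0.1}))$. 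Using $\beta\le k^{0.3}$ (so $k^{0.55}\le k/\sqrt\beta$), at most $10k/\sqrt\beta$ clusters remain $\beta$-unsettled after Phase 1.

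\textbf{Phase 2 (\localsearch).} Here I track the one-step increment $u_t-u_{t-1}$. If the step performs no swap, $C$ and hence $u_t$ are unchanged. If it swaps in the sampled point $p$ and removes $p'$: removing $p'$ can unsettle at most one cluster, namely the unique $OPT$ cluster containing $p'$, because $R_{Q,\beta}\subseteq Q$ so only centers inside $Q$ influence $Q$'s settled-ness; and adding $p$ never unsettles anything. On the good event of \cref{cor:I_like_Sebastians_chocolate} (probability at least $1-8/\sqrt\beta$ at this step), the $OPT$ cluster $Q\ni p$ was $\beta$-unsettled and $p\in R_{Q,\beta}$, so after $p$ is added $Q$ is $\beta$-settled, cancelling the at-most-one new unsettled cluster from the removal. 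Thus $u_t-u_{t-1}\le Z_t$, where $Z_t$ is the indicator that a swap occurred and the good event failed, and $\Pr[Z_t=1\mid\mathcal F_{t-1}]\le 8/\sqrt\beta$. Summing over the $\ell\le k$ steps and applying Hoeffding again, $\sum_t Z_t\le 8k/\sqrt\beta+k^{0.55}\le 9k/\sqrt\beta$ with probability $1-\exp(-\Omega(k^{0.1}))$, so $u_\ell\le 10k/\sqrt\beta+9k/\sqrt\beta\le 30k/\sqrt\beta$. Union-bounding the two Hoeffding failures over the at most $0.3\log k$ legal $\beta$ keeps the total failure probability at $\exp(-\Omega(k^{0.1}))$.

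\textbf{Main obstacle.} I expect the delicate part to be Phase 2: arguing cleanly that a single swap creates at most one new unsettled cluster and that the corollary's good event exactly cancels it, together with a clean way to handle that \cref{cor:I_like_Sebastians_chocolate} applies only while $cost(P,C_t)>\beta^{3/2}\,OPT$ (the truncation/stopping-time device). The concentration steps are routine once stochastic domination is invoked, and Phase 1 is straightforward given monotonicity of settled-ness.
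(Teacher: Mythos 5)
Your proposal is correct and follows essentially the same route as the paper: it splits the analysis into the \kmp{} phase and the \localsearch{} phase (the paper does this via \cref{prop:kmeans++_doesnt_screwup} followed by the proof of \cref{lem:quality_deteriorates_slowly}), uses the monotonicity of $cost(P,C_t)$ together with truncated/stopped indicators so that \cref{cor:I_like_Sebastians_chocolate} is applicable at every step on the event $\{\beta\le\alpha^{2/3}\}$, observes that a swap removes at most one point from a single $R_{Q,\beta}$ while the good event adds one, applies stochastic domination and a Chernoff/Hoeffding bound, and finally union bounds over the $\O(\log k)$ legal $\beta$. Your constants ($10k/\sqrt\beta+9k/\sqrt\beta\le 30k/\sqrt\beta$) are slightly tighter than the paper's $10+20$, but the argument is the same.
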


Recall that every legal $\beta$ is smaller than $k^{0.3}$, as for larger $\beta$ one cannot obtain strong concentration results.
For reasonably small $\alpha$, \cref{lem:quality_deteriorates_slowly} allows us to conclude that there are at most $\O(k / \sqrt{\alpha^{2/3}}) = O(k/\sqrt[3]{\alpha})$ clusters that are $\alpha^{2/3}$-unsettled, with high probability in $k$.
Conditioned on this event, we can expect a stronger multiplicative improvement in one iteration of \localsearch{} compared to Lemma 3 of \cite{lattanzi2019better}.

\subsection{One step of \localsearch{}}

Given the structural analysis of the previous section, we can now analyze the \localsearch{} procedure.
First, we will identify clusters whose removal will not significantly increase the current cost, thus making them good candidates for swapping with the newly sampled center.

To that end, we define subsets of \emph{matched} and \emph{lonely} candidate centers $M \subseteq C$ and $L \subseteq C$.
The notion of lonely centers came from Kanungo et al. \yrcite{kanungo2004local}.
To describe the \emph{same} subsets, Lattanzi and Sohler \yrcite{lattanzi2019better} used the notation $H$ and $L$, while we use $M$ and $L$.
For an illustration of these definitions, see \cref{fig:pairing-of-centers}.

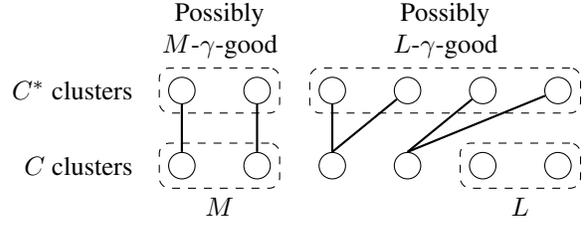
\begin{figure}[t]
	\centering
	\begin{tikzpicture}
		\node[draw, circle, minimum size=10pt] at (1,1) (q1) {};
		\node[draw, circle, minimum size=10pt] at (2,1) (q2) {};
		\node[draw, circle, minimum size=10pt] at (3,1) (q3) {};
		\node[draw, circle, minimum size=10pt] at (4,1) (q4) {};
		\node[draw, circle, minimum size=10pt] at (5,1) (q5) {};
		\node[draw, circle, minimum size=10pt] at (6,1) (q6) {};
		\node[text centered, align=right, left=10pt of q1] {$C^*$ clusters};

		\node[draw, circle, minimum size=10pt] at (1,0) (c1) {};
		\node[draw, circle, minimum size=10pt] at (2,0) (c2) {};
		\node[draw, circle, minimum size=10pt] at (3,0) (c3) {};
		\node[draw, circle, minimum size=10pt] at (4,0) (c4) {};
		\node[draw, circle, minimum size=10pt] at (5,0) (c5) {};
		\node[draw, circle, minimum size=10pt] at (6,0) (c6) {};
		\node[text centered, align=right, left=10pt of c1] {$C$ clusters};

		\draw[thick] (q1) -- (c1.north);
		\draw[thick] (q2) -- (c2.north);
		\draw[thick] (q3) -- (c3.north);
		\draw[thick] (q4) -- (c3.north);
		\draw[thick] (q5) -- (c4.north);
		\draw[thick] (q6) -- (c4.north);

		\node[draw, dashed, rounded corners, fit=(q1)(q2), label={[align=center]above:Possibly\\ $M$-$\gamma$-good}] {};
		\node[draw, dashed, rounded corners, fit=(q3)(q4)(q5)(q6), label={[align=center]above:Possibly\\ $L$-$\gamma$-good}] {};
		\node[draw, dashed, rounded corners, fit=(c1)(c2), label=below:$M$] {};
		\node[draw, dashed, rounded corners, fit=(c5)(c6), label=below:$L$] {};

	\end{tikzpicture}
	\caption{
		Let $k=6$.
		The top row represents the $k$ $OPT$ centers $C^*$.
		The bottom row represents the $k$ candidate centers $C$.
		Each $OPT$ center is connected to the closest candidate center by a line.
		Observe that $M$ and $L$ are subsets of $C$, where some candidate centers might be in neither $M$ nor $L$, and that $\gamma$-goodness is defined on the $OPT$ centers.
		This example shows a tight case for \cref{obs:double_counting_L}.
	}
	\label{fig:pairing-of-centers}
\end{figure}

\begin{definition}[$M$ and $L$ candidates]
	\label{def:HL}
	We assign $OPT$ centers $c^* \in C^*$ to candidate centers $c \in C$, and define the notion of matched ($M$) and lonely ($L$) on candidates based on assignment outcome.
	For each $c^* \in C^*$, assign $c^*$ to the closest $c \in C$, breaking ties arbitrarily.
	We say candidate $c \in C$ is \emph{matched} if there is \emph{exactly one} $c^* \in C^*$ assigned to it and we call $c^*$ the \emph{mate} of $c$.
	We say candidate $c \in C$ is \emph{lonely} if there is \emph{no} $c^* \in C^*$ assigned to it.

	We define $M \subseteq C$ as the set of matched candidates and $L \subseteq C$ as the set of lonely candidates.
	We sometimes overload notation and write $(c,c^*) \in M$ if $c \in C$ is a matched candidate center with mate $c^* \in C^*$.
\end{definition}

\begin{observation}
	\label{obs:double_counting_L}
	Since $|C| = |C^*| = k$, a counting argument tells us that $k - |M| \le 2 |L|$.
\end{observation}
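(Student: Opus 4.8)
The plan is to make the ``counting argument'' explicit by partitioning the candidate set $C$ according to how many $OPT$ centers are assigned to each candidate. Recall that \cref{def:HL} assigns each $c^* \in C^*$ to its closest candidate, so this defines a map $f : C^* \to C$; write $\mathrm{deg}(c) = |f^{-1}(c)|$ for the number of $OPT$ centers assigned to $c$. By definition, $L$ is the set of candidates with $\mathrm{deg}(c) = 0$ and $M$ is the set with $\mathrm{deg}(c) = 1$. Let $H' = C \setminus (M \cup L)$ be the remaining candidates, i.e.\ those with $\mathrm{deg}(c) \ge 2$. Since $M$, $L$, $H'$ partition $C$ and $|C| = k$, we immediately get $k - |M| = |L| + |H'|$, so it suffices to show $|H'| \le |L|$.

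For that, I would sum degrees two ways. On one hand, $\sum_{c \in C} \mathrm{deg}(c) = |C^*| = k$ because each of the $k$ centers in $C^*$ is assigned to exactly one candidate. On the other hand, splitting the sum over the partition and using $\mathrm{deg}(c) = 0$ on $L$, $\mathrm{deg}(c) = 1$ on $M$, and $\mathrm{deg}(c) \ge 2$ on $H'$ gives $\sum_{c \in C} \mathrm{deg}(c) \ge |M| + 2|H'|$. Combining, $|M| + 2|H'| \le k = |L| + |M| + |H'|$, which rearranges to $|H'| \le |L|$. Plugging back, $k - |M| = |L| + |H'| \le 2|L|$, as claimed.

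There is no real obstacle here — the statement is pure bookkeeping once the ``high-degree'' set $H'$ is named and the two expressions for $\sum_c \mathrm{deg}(c)$ are written down; the only thing to be careful about is that $M \cup L \cup H'$ is genuinely a partition of $C$ (every candidate has a well-defined degree since ties in ``closest candidate'' are broken arbitrarily but deterministically), and that the inequality $|C^*| = |C| = k$ is used in exactly the one place where $k$ is replaced by $|L| + |M| + |H'|$. The tight example is the one drawn in \cref{fig:pairing-of-centers}, where every high-degree candidate has degree exactly $2$ and there are as many lonely candidates as high-degree ones.
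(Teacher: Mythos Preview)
Your argument is correct and is exactly the counting argument the paper alludes to; the paper states the observation without proof, and your degree-counting via the partition $C = L \cup M \cup H'$ makes the implied bookkeeping explicit.
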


We now define reassignment costs for candidate centers $c \in M \cup L \subseteq C$ and the notion of $\gamma$-good $OPT$ centers\footnote{For clarity, we define using $\gamma$. Later, we set $\gamma = \sqrt{\beta}$.}.
Informally, $OPT$ center $c^* \in C^*$ is $\gamma$-good if selecting a random point in $Q_{c^*}$ and removing a suitable candidate $c \in C$ reduces a ``sufficient'' fraction of the current clustering cost with a constant probability.

\begin{definition}[Reassignment costs]
	\hspace{0pt}\\
	If $(c,c^*) \in M$,
	\begin{multline*}
		\reassign(P,C,c)\\
		=  cost(P \setminus Q_{c^*}, C \setminus \{c\}) - cost(P \setminus Q_{c^*}, C)
	\end{multline*}
	If $c \in L$,
	\begin{multline*}
		\reassign(P,C,c) = cost(P, C \setminus \{c\}) - cost(P, C)
	\end{multline*}
\end{definition}

We will use the following lemma about reassignment costs, proven in Lemma 4 of \cite{lattanzi2019better}.

\begin{lemma}
	\label{prop:bound-reassignment-cost}
	For $c \in M \cup L$, with $P_c$ as the points assigned to $c$,
	\begin{multline*}
		\reassign(P,C,c) \le \frac{21}{100}\: cost(P_c, C) + 24\: cost(P_c, C^*)
	\end{multline*}
\end{lemma}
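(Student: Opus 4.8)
The plan is to charge the reassignment cost point by point. First I would observe that deleting $c$ from $C$ only changes the cost of points whose closest center was $c$, i.e.\ points of $P_c$: for any $p \notin P_c$ the center attaining $cost(p,C)$ still lies in $C \setminus \{c\}$. Hence, writing $S = P_c$ when $c \in L$ and $S = P_c \setminus Q_{c^*}$ when $(c,c^*) \in M$, in both cases $\reassign(P,C,c) = cost(S, C \setminus \{c\}) - cost(S, C)$ (in the matched case this is because, on $P \setminus Q_{c^*}$, only points of $P_c$ see a change). Since $cost(p,C) = \|p-c\|^2$ for $p \in P_c$, it suffices to bound $cost(p, C \setminus \{c\}) - \|p-c\|^2$ for each $p \in S$ and then sum.

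For such a $p$, let $c^*_p \in C^*$ be the optimal center with $p \in Q_{c^*_p}$, and let $c' \in C$ be the candidate that $c^*_p$ is assigned to in \cref{def:HL}, i.e.\ the candidate nearest to $c^*_p$. The combinatorial crux is that $c' \neq c$ for every $p \in S$: if $c \in L$, no optimal center is assigned to $c$, so $c^*_p$ is not; if $(c,c^*) \in M$, the unique optimal center assigned to $c$ is $c^*$, but we removed $Q_{c^*}$, so $c^*_p \neq c^*$ and $c^*_p$ is again not assigned to $c$. Thus $c' \in C \setminus \{c\}$ is a valid reassignment target, and $cost(p, C \setminus \{c\}) \le \|p - c'\|^2$.

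Finally I would estimate $\|p - c'\|$ with two triangle inequalities: $\|p - c'\| \le \|p - c^*_p\| + \|c^*_p - c'\| \le \|p - c^*_p\| + \|c^*_p - c\| \le 2\|p - c^*_p\| + \|p - c\|$, using in the middle step that $c'$ is the candidate closest to $c^*_p$. Squaring and subtracting $\|p-c\|^2$ leaves $4\|p - c^*_p\|^2 + 4\|p - c^*_p\|\,\|p - c\|$; splitting the cross term with the weighted AM--GM inequality $4ab \le 2\lambda a^2 + (2/\lambda) b^2$ and taking $\lambda = 200/21$ makes the coefficient of $\|p-c\|^2$ exactly $21/100$ and the total coefficient of $\|p - c^*_p\|^2$ equal to $4 + 400/21 < 24$. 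Recalling $cost(p, C^*) = \|p - c^*_p\|^2$, summing over $p \in S$ and using $cost(S, \cdot) \le cost(P_c, \cdot)$ gives $\reassign(P,C,c) \le \tfrac{21}{100}\,cost(P_c, C) + 24\,cost(P_c, C^*)$. The one genuinely delicate point is the combinatorial claim $c' \neq c$ --- precisely where the matched/lonely dichotomy and the exclusion of $Q_{c^*}$ are needed --- together with tracking the constants; the metric estimates themselves are routine. (As the statement is Lemma 4 of \cite{lattanzi2019better}, one could alternatively just invoke it.)
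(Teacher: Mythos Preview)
Your argument is correct: the pointwise reassignment via the candidate $c'$ closest to $p$'s optimal center, the combinatorial claim $c'\neq c$ (which is exactly where the $M$/$L$ definitions and the exclusion of $Q_{c^*}$ are used), and the triangle-inequality plus weighted AM--GM computation all go through, and $4+400/21<24$ yields the stated constants. There is nothing to compare against, however: the paper does not supply its own proof of this lemma but simply cites it as Lemma~4 of \cite{lattanzi2019better}, exactly as you note in your final parenthetical. Your write-up is in fact essentially the argument given there.
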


\begin{definition}[$M$-$\gamma$-good and $L$-$\gamma$-good]
	\label{def:good}
	\hspace{0pt}\\
	We say that $c^* \in C^* \cap M$ with mate $c \in C$ is $M$-$\gamma$-good if
	\begin{multline*}
		cost(Q_{c^*}, C) - \reassign(P,C,c) - 100 \cdot cost(Q_{c^*},c^*)\\
		> \frac{\gamma}{10^4k} \cdot cost(P,C).
	\end{multline*}
	We say that $c^* \in C^* \setminus M$ is $L$-$\gamma$-good if
	\begin{multline*}
		cost(Q_{c^*}, C) - \min_{c\in L}\reassign(P,C,c)\\
		- 100 \cdot cost(Q_{c^*},c^*) > \frac{\gamma}{10^4k} \cdot cost(P,C).
	\end{multline*}
\end{definition}

\begin{claim}
	\label{prop:Good_Prob_To_Sample_Center}
	Let $Q$ be a $M$-$\gamma$-good or $L$-$\gamma$-good cluster and we $D^2$-sample a point $q \in Q$.
	Then, with probability at least $2 / 5$, we have $cost(Q, q) \le 100 \cdot cost(Q, \mu_Q)$.
\end{claim}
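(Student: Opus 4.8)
The plan is to deduce the claim directly from \cref{lem:sampled_point_is_from_unsettled_cluster}, once we observe that a $\gamma$-good cluster is necessarily badly approximated by the current solution. Concretely, writing $Q = Q_{c^*}$ for the optimal cluster in question, I would first establish that $cost(Q,C) > 100 \cdot cost(Q, \mu_Q)$, and then apply the sampling lemma with $\beta = 100$ to finish.

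For the first step I would unwind \cref{def:good}. If $c^* \in C^* \cap M$ with mate $c$ and $Q$ is $M$-$\gamma$-good, then $cost(Q,C) - \reassign(P,C,c) - 100 \cdot cost(Q, c^*) > \frac{\gamma}{10^4 k} \cdot cost(P,C) \ge 0$. The reassignment cost $\reassign(P,C,c)$ is nonnegative, since deleting a center from $C$ can only (weakly) increase the cost of any point set; hence $cost(Q,C) > 100 \cdot cost(Q, c^*)$. Applying \cref{prop:physics_lemma} to the point $c^*$ gives $cost(Q,c^*) = |Q| \cdot \Vert c^* - \mu_Q \Vert^2 + cost(Q,\mu_Q) \ge cost(Q,\mu_Q)$, and therefore $cost(Q,C) > 100 \cdot cost(Q,\mu_Q)$. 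The case $c^* \in C^* \setminus M$ with $Q$ being $L$-$\gamma$-good is handled identically, using $\min_{c \in L}\reassign(P,C,c) \ge 0$ in place of $\reassign(P,C,c)$.

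For the second step I would invoke \cref{lem:sampled_point_is_from_unsettled_cluster} with $\beta = 100 \ge 4$: since $cost(Q,C) \ge 100 \cdot cost(Q,\mu_Q)$, $D^2$-sampling a point $q \in Q$ yields, with probability at least $1 - 6/\sqrt{100} = 2/5$, a point with $cost(Q,q) \le 100 \cdot cost(Q,\mu_Q)$, which is exactly the statement of the claim.

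I do not expect a genuine obstacle here: the argument is essentially bookkeeping, relying only on the nonnegativity of reassignment costs and of the slack term $\frac{\gamma}{10^4 k}\cdot cost(P,C)$, together with the elementary bound $cost(Q,c^*) \ge cost(Q,\mu_Q)$ from the parallel-axis identity. The two points deserving a second look are that the $D^2$-sampling model in \cref{lem:sampled_point_is_from_unsettled_cluster} (sampling within $Q$ proportional to $D^2$ weights) matches the one used in the claim, and that the numerical constants line up — at $\beta = 100$ the failure probability $6/\sqrt{\beta}$ equals exactly $3/5$, leaving precisely the advertised $2/5$, so there is no slack to spare and the constant $100$ in \cref{def:good} is what makes this work.
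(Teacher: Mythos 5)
Your proposal is correct and follows essentially the same route as the paper: unpack \cref{def:good}, drop the nonnegative reassignment and slack terms to get $cost(Q,C) > 100 \cdot cost(Q,\mu_Q)$, then invoke \cref{lem:sampled_point_is_from_unsettled_cluster} with $\beta = 100$. The one small difference is that the paper silently identifies $cost(Q_{c^*},c^*)$ with $cost(Q_{c^*},\mu_{Q_{c^*}})$ (since an optimal center is the centroid of its cluster), whereas you instead use \cref{prop:physics_lemma} to argue $cost(Q,c^*)\ge cost(Q,\mu_Q)$ without assuming $c^*=\mu_Q$ — a harmless extra sentence that is, if anything, slightly more careful.
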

\begin{proof}
	Let $C$ denotes the current set of candidate centers.
	Suppose cluster $Q$ is $M$-$\gamma$-good.
	Then,
	\begin{multline*}
		cost(Q,C) > \reassign(P,C,c) + 100 \cdot cost(Q,\mu_Q)\\
		+ \frac{\gamma}{10^4k} \cdot cost(P,C)
		\geq 100 \cdot cost(Q,\mu_Q)
	\end{multline*}
	By \cref{lem:sampled_point_is_from_unsettled_cluster}, we have $cost(Q,q) \leq 100 \cdot cost(Q,\mu_Q)$ with probability at least $1 - 6 / \sqrt{100} = 2 / 5$.
	The same argument holds when $Q$ is an $L$-$\gamma$-good cluster by applying the definition of $L$-$\gamma$-good instead.
\end{proof}

Conditioned on our main structural insight (\cref{lem:quality_deteriorates_slowly}), we sample a point from an $M$-$\sqrt\beta$-good or $L$-$\sqrt\beta$-good cluster $Q$ with constant probability, for every legal $\beta \in \mathcal{B}$ such that $4 \le \beta \le \alpha^{2/3}$ and $\alpha \ge 10^9$.
When this happens, the sampled point $s$ satisfies $cost(Q,s) \leq 100 \cdot cost(Q,\mu_Q)$ with constant probability.
In that case, the definition of $M$-$\sqrt\beta$-good and $L$-$\sqrt\beta$-good implies the existence of a candidate $t \in C$ with $cost(P, C \setminus \{t\} \cup \{s\}) \le (1 - \frac{\sqrt{\beta}}{10^4k}) \cdot cost(P,C)$, so \localsearch{} makes progress.

Similar to the analysis of Lattanzi and Sohler \yrcite{lattanzi2019better}, we partition the space of possible events into whether $\sum_{(c,c^*) \in M} cost(Q_{c^*}, C) \ge cost(P,C)/2$, or not.
In each case, we argue that the probability of sampling a point contained in a $M$-$\sqrt\beta$-good or $L$-$\sqrt\beta$-good cluster happens with a positive constant probability for a suitable legal $\beta$.
We first refine Lemma 5 of \cite{lattanzi2019better}.

\begin{lemma}
	\label{prop:M-good_are_big_fraction}
	Suppose $2 \cdot \sum_{(c,c^*) \in M} cost(Q_{c^*}, C) \geq cost(P,C) = \alpha \cdot cost(P,C^*)$ for $\alpha \ge 10^9$.
	Let $4 \le \beta \le \alpha^{2/3}$ be arbitrary.
	If there are at most $(30k) / \sqrt\beta$ clusters that are $\beta$-unsettled, then
	\[
		\sum_{{\substack{(c,c^*) \in M,\\ \text{$c^* \in M$-$\sqrt\beta$-good}}}} cost(Q_{c^*}, C) \ge \frac{1}{500} \cdot cost(P,C).
	\]
\end{lemma}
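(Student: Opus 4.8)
The plan is to start from the hypothesis $\sum_{(c,c^*)\in M} cost(Q_{c^*},C) \ge \tfrac12\, cost(P,C)$ and to show that the ``bad'' matched clusters -- those whose mate is \emph{not} $M$-$\sqrt\beta$-good -- carry only a small fraction of $cost(P,C)$, so that the good ones absorb almost all of that $\tfrac12$ mass. Concretely, let $B$ be the set of pairs $(c,c^*)\in M$ with $c^*$ not $M$-$\sqrt\beta$-good. Unpacking \cref{def:good} with $\gamma=\sqrt\beta$, every $(c,c^*)\in B$ satisfies
\[
	cost(Q_{c^*},C) \;\le\; \reassign(P,C,c) + 100\cdot cost(Q_{c^*},c^*) + \tfrac{\sqrt\beta}{10^4 k}\cdot cost(P,C).
\]
It will suffice to prove $\sum_{(c,c^*)\in B} cost(Q_{c^*},C) < \tfrac14\, cost(P,C)$, since then the $M$-$\sqrt\beta$-good matched clusters carry at least $(\tfrac12-\tfrac14)\,cost(P,C) \ge \tfrac1{500}\,cost(P,C)$; note that the target is extremely slack.

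The key move is to split $B = B_{\mathrm s}\cup B_{\mathrm u}$ according to whether $Q_{c^*}$ is $\beta$-settled or $\beta$-unsettled. For $B_{\mathrm s}$ I would discard the displayed inequality and instead use \cref{prop:beta-settled}: a $\beta$-settled cluster is $(\beta+1)$-approximate, so $\sum_{(c,c^*)\in B_{\mathrm s}} cost(Q_{c^*},C) \le (\beta+1)\sum_{(c,c^*)\in B_{\mathrm s}} cost(Q_{c^*},c^*) \le (\beta+1)\,cost(P,C^*)$, the last step because the mates appearing in $M$ are pairwise distinct, hence the clusters $Q_{c^*}$ disjoint. Since $\beta \le \alpha^{2/3}$ and $cost(P,C^*) = cost(P,C)/\alpha$, this is at most $2\alpha^{-1/3}\,cost(P,C) \le \tfrac{2}{1000}\,cost(P,C)$, using $\alpha \ge 10^9$.

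For $B_{\mathrm u}$ I would sum the displayed inequality over its members and bound the three pieces. The reassignment terms sum to at most $\sum_{c\in M}\reassign(P,C,c) \le \tfrac{21}{100}\,cost(P,C) + 24\,cost(P,C^*)$ by \cref{prop:bound-reassignment-cost} together with the fact that the assignment cells $P_c$ ($c\in C$) partition $P$, so $\sum_c cost(P_c,C) = cost(P,C)$ and $\sum_c cost(P_c,C^*) = cost(P,C^*)$. The centroid terms sum to at most $100\,cost(P,C^*)$ by disjointness of the $Q_{c^*}$. The linear terms are where the structural hypothesis is essential: there are at most $(30k)/\sqrt\beta$ clusters that are $\beta$-unsettled, so $|B_{\mathrm u}| \le (30k)/\sqrt\beta$ and these terms sum to $|B_{\mathrm u}|\cdot\tfrac{\sqrt\beta}{10^4 k}\,cost(P,C) \le \tfrac{30}{10^4}\,cost(P,C)$. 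Adding up and using $cost(P,C^*)/cost(P,C) = 1/\alpha \le 10^{-9}$ gives $\sum_{(c,c^*)\in B_{\mathrm u}} cost(Q_{c^*},C) \le \big(\tfrac{21}{100} + \tfrac{30}{10^4} + 124\cdot 10^{-9}\big)cost(P,C)$. Combined with the $B_{\mathrm s}$ bound this yields $\sum_{(c,c^*)\in B} cost(Q_{c^*},C) < \tfrac14\,cost(P,C)$, which finishes the argument.

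The main obstacle -- indeed the only nontrivial idea -- is precisely this settled/unsettled dichotomy. If one bounded $|B|$ by the trivial $k$, the linear term would become $\tfrac{\sqrt\beta}{10^4}\,cost(P,C)$, which is worthless when $\beta$ is large (and $\beta$ may be as big as $\alpha^{2/3}$); the assumption that only $(30k)/\sqrt\beta$ clusters are $\beta$-unsettled is exactly what cancels the $\sqrt\beta$ on the unsettled side, while the few settled bad clusters are automatically cheap because settled $\Rightarrow$ well-approximated (\cref{prop:beta-settled}) and $\beta/\alpha = O(\alpha^{-1/3})$ is tiny. Once the split is in place, the remainder is routine accounting with the explicit constants, and the wide gap between $\tfrac14$ and $\tfrac1{500}$ means none of the constants need to be optimized.
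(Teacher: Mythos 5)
Your proof is correct and takes essentially the same route as the paper: the same partition of the non-$M$-$\sqrt\beta$-good matched clusters into $\beta$-settled and $\beta$-unsettled parts, the same use of \cref{prop:beta-settled} on the settled part and of the bound on the number of $\beta$-unsettled clusters (together with \cref{def:good} and \cref{prop:bound-reassignment-cost}) on the unsettled part, and the same final arithmetic exploiting $\alpha \ge 10^9$.
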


\cref{prop:M-good_are_big_fraction} tells us that if points in $M$ have sufficiently large probability mass, then points in $M$-$\sqrt\beta$-good clusters hold a constant fraction of the total probability mass.

\begin{proof}
	We show that the probability mass is large by upper bounding the probability mass on its negation.
	To do this, we partition the summation of $c^* \not\in M$-$\sqrt\beta$-good into $\beta$-settled and $\beta$-unsettled.
	We denote
	\begin{align*}
		\mathcal{A} & = \left\{(c,c^*) \in M, \text{$c^* \not\in M$-$\sqrt\beta$-good}, \text{$c^*$ is $\beta$-settled} \right\}   \\
		\mathcal{B} & = \left\{(c,c^*) \in M, \text{$c^* \not\in M$-$\sqrt\beta$-good}, \text{$c^*$ is $\beta$-unsettled} \right\}
	\end{align*}
	From \cref{prop:beta-settled}, we know that $C$ pays no more than $ (\beta +1) \cdot cost(P,C^*)$ for all $\beta$-settled clusters.
	So,
	\begin{multline*}
		\sum_{\mathcal{A}} cost(Q_{c^*}, C)
		\leq (\beta+1) \cdot cost(P,C^*)\\
		\leq (\alpha^{2/3}+1) \cdot cost(P,C^*)
		\leq \frac{2\alpha^{2/3}}{\alpha} \cdot cost(P,C)\\
		\leq \frac{1}{500} \cdot cost(P,C)
	\end{multline*}

	To bound $\sum_{\mathcal{B}} cost(Q_{c^*}, C)$, recall that $P$ is the set of \emph{all} points and $Q_{c^*} \subseteq P$ for any $c^* \in C^*$.
	\begin{align*}
		    & \sum_{\mathcal{B}} cost(Q_{c^*}, C)                                                     \\
		\le & \sum_{\mathcal{B}} \Big( \reassign(P,C,c) + 100 \cdot cost(Q_{c^*},c^*)                 \\
		    & + \frac{\sqrt\beta}{10^4k} \cdot cost(P,C) \Big)                           &  & (\star) \\
		\le & \left( \sum_{\mathcal{B}} \reassign(P,C,c) \right) + 100 \cdot cost(P,C^*)              \\
		    & + \frac{30}{10^4} \cdot cost(P,C)                                          &  & (\dag)  \\
		\le & \frac{21}{100} \cdot cost(P,C) + 24 \cdot cost(P,C^*)                                   \\
		    & + 100 \cdot cost(P,C^*) + \frac{30}{10^4} \cdot cost(P,C)                  &  & (\ddag) \\
		\le & \frac{250}{1000} \cdot cost(P,C)                                           &  & (\ast)
	\end{align*}

	\paragraph{(Legend)}
	$(\star)$: \cref{def:good};
	$(\dag)$: because there are at most $\frac{30 k}{\sqrt\beta}$ clusters that are $\beta$-unsettled;
	$(\ddag)$: \cref{prop:bound-reassignment-cost};
	$(\ast)$: $cost(P,C) \ge 10^9 \cdot cost(P,C^*)$

	Thus,
	\begin{multline*}
		\sum_{{\substack{(c,c^*) \in M,\\ \text{$c^* \in M$-$\sqrt\beta$-good}}}} cost(Q_{c^*}, C)\\
		\ge \left( \frac{1}{2} - \frac{1}{500} - \frac{250}{1000} \right) \cdot cost(P,C)
		\ge \frac{1}{500} \cdot cost(P,C)
	\end{multline*}
\end{proof}

Using the same structural insight on $\beta$-unsettled clusters, we now refine Lemma 7 of \cite{lattanzi2019better}.
Abusing notation, we use $C^* \setminus M$ to denote the set of optimal cluster centers which don't have a mate.
That is, the point $c \in C$ which $c^*$ is assigned to is assigned to has more than one center of $C^*$ assigned to it.

\begin{restatable}{lemma}{lemlgoodfraction}
	\label{prop:L-good_are_big_fraction}
	Suppose $2 \cdot \sum_{(c,c^*) \in M} cost(Q_{c^*}, C) < cost(P,C) = \alpha \cdot cost(P,C^*)$ for $\alpha \ge 10^9$.
	Let $4 \le \beta \le \alpha^{2/3}$  be arbitrary.
	If there are at most $(30k) / \sqrt\beta$ clusters that are $\beta$-unsettled, then
	\[
		\sum_{{\substack{c^* \in C^* \setminus M,\\ \text{$c^* \in L$-$\sqrt\beta$-good}}}} cost(Q_{c^*}, C) \ge \frac{1}{500} \cdot cost(P,C).
	\]
\end{restatable}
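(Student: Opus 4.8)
The plan is to mirror the proof of \cref{prop:M-good_are_big_fraction}: bound the probability mass carried by the ``bad'' optimal centers and subtract it from the mass available in $C^* \setminus M$. First I would note that $\sum_{c^* \in C^*} cost(Q_{c^*}, C) = cost(P,C)$, so the hypothesis $2 \sum_{(c,c^*)\in M} cost(Q_{c^*},C) < cost(P,C)$ yields $\sum_{c^* \in C^* \setminus M} cost(Q_{c^*},C) > cost(P,C)/2$. The same hypothesis forces $|M| < k$, and then \cref{obs:double_counting_L} gives $|L| \ge 1$, so $\min_{c\in L}\reassign(P,C,c)$ is well-defined. I would then partition the set of $c^* \in C^* \setminus M$ that are \emph{not} $L$-$\sqrt\beta$-good into those whose cluster is $\beta$-settled and those whose cluster is $\beta$-unsettled, and call the latter set $\mathcal{U}$.

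For the $\beta$-settled part the argument is verbatim that of \cref{prop:M-good_are_big_fraction}: \cref{prop:beta-settled} bounds the total cost $C$ pays on all $\beta$-settled clusters by $(\beta+1)\, cost(P,C^*) \le (\alpha^{2/3}+1)\, cost(P,C^*) \le \tfrac{2}{\alpha^{1/3}}\, cost(P,C) \le \tfrac{1}{500}\, cost(P,C)$, using $\beta \le \alpha^{2/3}$ and $\alpha \ge 10^9$.

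For the $\beta$-unsettled part, negating \cref{def:good} gives, for each $c^* \in \mathcal{U}$,
\[ cost(Q_{c^*},C) \le \min_{c\in L}\reassign(P,C,c) + 100\, cost(Q_{c^*},c^*) + \tfrac{\sqrt\beta}{10^4 k}\, cost(P,C). \]
Here I would use both bounds $|\mathcal{U}| \le 30k/\sqrt\beta$ (from the hypothesis on $\beta$-unsettled clusters) and $|\mathcal{U}| \le |C^* \setminus M| = k - |M| \le 2|L|$ (from \cref{obs:double_counting_L}). Summing, the $cost(Q_{c^*},c^*)$ terms telescope to at most $100\, cost(P,C^*)$, and the $\tfrac{\sqrt\beta}{10^4 k}$ terms contribute at most $\tfrac{30}{10^4}\, cost(P,C)$ via the first bound. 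The genuinely new ingredient, compared to the matched case, is the shared term: every $c^* \in \mathcal{U}$ competes for the \emph{same} quantity $\min_{c\in L}\reassign(P,C,c)$, so I cannot charge each one to a private mate whose reassignment sets partition $P$. Instead I would invoke $|\mathcal{U}| \le 2|L|$ together with $\min_{c\in L}\reassign(P,C,c) \le \tfrac{1}{|L|}\sum_{c\in L}\reassign(P,C,c)$, so the shared term contributes at most $2\sum_{c\in L}\reassign(P,C,c) \le 2\big(\tfrac{21}{100}\, cost(P,C) + 24\, cost(P,C^*)\big)$ by \cref{prop:bound-reassignment-cost} applied to the disjoint sets $\{P_c : c \in L\}$. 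Collecting everything and using $cost(P,C^*) \le 10^{-9}\, cost(P,C)$, the total mass on bad centers is at most $\tfrac{1}{500}\, cost(P,C) + \tfrac{42}{100}\, cost(P,C) + \tfrac{30}{10^4}\, cost(P,C) + 148\, cost(P,C^*) < 0.43\, cost(P,C)$, hence the $L$-$\sqrt\beta$-good mass is at least $\big(\tfrac12 - 0.43\big)\, cost(P,C) \ge \tfrac{1}{500}\, cost(P,C)$.

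I expect the main obstacle to be exactly this shared $\min_{c\in L}\reassign$ term: unlike matched optimal centers, the lonely candidates get reused across all unsettled clusters of $C^* \setminus M$, so the clean ``partition of $P$'' accounting from \cref{prop:M-good_are_big_fraction} breaks, and one must instead trade a factor of two (from $k - |M| \le 2|L|$) against the slack between $\tfrac12$ and the remaining bad-mass terms; checking that this slack survives — and that $|L| \ge 1$ so the minimum is meaningful — is the delicate part of the bookkeeping.
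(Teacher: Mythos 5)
Your proposal is correct and follows essentially the same route as the paper: partition the non-$L$-$\sqrt\beta$-good centers in $C^*\setminus M$ into $\beta$-settled and $\beta$-unsettled, bound the settled part via \cref{prop:beta-settled}, and bound the unsettled part by summing the negated goodness condition and then trading the shared $\min_{c\in L}\reassign$ term for $2\sum_{c\in L}\reassign$ via $k-|M|\le 2|L|$ (\cref{obs:double_counting_L}) and min-vs-average, exactly as in the paper. One small bonus of your write-up is the explicit check that $|L|\ge 1$ (so the minimum is well-defined), which the paper leaves implicit.
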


\cref{prop:L-good_are_big_fraction} tells us that if points in $C^* \setminus M$ have sufficiently large probability mass, then points in $L$-$\sqrt{\beta}$-good clusters hold a constant fraction of the total probability mass.

With \cref{prop:M-good_are_big_fraction} and \cref{prop:L-good_are_big_fraction}, we can now refine Lemma 3 from \cite{lattanzi2019better}.

\begin{restatable}{lemma}{refinedlemma}
	\label{lem:lemma3}
	Suppose we have a clustering $C$ with $cost(P,C) = \alpha \cdot cost(P, C^*)$ for some $\alpha \ge 10^9$.
	Assume that for each legal $\beta$, where $\beta \leq \alpha^{2/3}$, there are at most $(30k) / \sqrt\beta$ clusters that are $\beta$-unsettled.
	If we update $C$ to $C'$ in one \localsearch{} iteration, we have with probability at least $1/2000$:
	\[
		cost(P,C') \leq \left( 1 - \frac{\min \{ \sqrt[3]{\alpha},\, k^{0.15} \}}{2 \cdot 10^4 k} \right) \cdot cost(P,C)
	\]
\end{restatable}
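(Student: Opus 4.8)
The plan is to commit to a single, carefully tuned legal value of $\beta$, then use the dichotomy of \cref{prop:M-good_are_big_fraction} and \cref{prop:L-good_are_big_fraction} to place a constant fraction of the $D^2$-sampling mass onto $\sqrt\beta$-good clusters, and finally exhibit an explicit cost-decreasing swap for the sampled point. Concretely, I would take $\beta$ to be the largest legal value with $\beta \le \alpha^{2/3}$. Since $\alpha \ge 10^9$ forces $\alpha^{2/3} \ge 10^6 \ge 8$, such a $\beta$ exists and satisfies $\beta \ge 8 \ge 4$; moreover, because consecutive legal values differ by a factor of two and every legal value lies in $[8, k^{0.3}]$, one checks that $\beta > \tfrac12 \min\{\alpha^{2/3}, k^{0.3}\}$ and hence $\sqrt\beta > \tfrac{1}{\sqrt 2}\min\{\sqrt[3]{\alpha}, k^{0.15}\} \ge \tfrac12 \min\{\sqrt[3]{\alpha}, k^{0.15}\}$. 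For this $\beta$ the hypothesis of the lemma provides at most $(30k)/\sqrt\beta$ clusters that are $\beta$-unsettled, which is exactly the precondition of \cref{prop:M-good_are_big_fraction} and \cref{prop:L-good_are_big_fraction}.

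Next I would split on whether $2 \sum_{(c,c^*)\in M} cost(Q_{c^*},C) \ge cost(P,C)$ or not. In the first case \cref{prop:M-good_are_big_fraction} says the $M$-$\sqrt\beta$-good clusters carry at least $\tfrac{1}{500} cost(P,C)$ of the total cost; in the second case (in which $L \neq \emptyset$, since otherwise the matched pairs would account for all of $cost(P,C)$, contradicting the case assumption) \cref{prop:L-good_are_big_fraction} gives the same for the $L$-$\sqrt\beta$-good clusters. Writing $\mathcal G$ for the union of the relevant good clusters, the $D^2$-sampled point $s$ of \cref{alg:localsearch} lands in $\mathcal G$ with probability at least $1/500$. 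Conditioned on $s$ lying in a particular good cluster $Q$, \cref{prop:Good_Prob_To_Sample_Center} gives $cost(Q,s) \le 100 \cdot cost(Q,\mu_Q)$ with probability at least $2/5$, and averaging over which good cluster contains $s$ keeps this bound at $2/5$ conditioned on $s \in \mathcal G$. Thus with probability at least $\tfrac{1}{500}\cdot\tfrac{2}{5} = \tfrac{1}{1250} \ge \tfrac{1}{2000}$, the sampled point $s$ lies in a good cluster $Q_{c^*}$ with $cost(Q_{c^*},s) \le 100 \cdot cost(Q_{c^*},c^*)$ (using $c^* = \mu_{Q_{c^*}}$); call this event $\mathcal E$.

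It remains to show that, on $\mathcal E$, a single \localsearch{} step drives the cost below $\bigl(1 - \tfrac{\sqrt\beta}{10^4 k}\bigr) cost(P,C)$. In the $M$-good case let $c$ be the mate of $c^*$ and consider $\widehat C = C \setminus \{c\} \cup \{s\}$: charging the points of $Q_{c^*}$ to $s$ and the remaining points to $C \setminus \{c\}$, and using additivity of $cost(\cdot,C)$ over disjoint point sets together with the definition of $\reassign(P,C,c)$ for a matched center, I obtain
\[
cost(P,\widehat C) \le cost(P,C) - \bigl( cost(Q_{c^*},C) - \reassign(P,C,c) - 100 \cdot cost(Q_{c^*},c^*) \bigr);
\]
by \cref{def:good} the bracketed quantity exceeds $\tfrac{\sqrt\beta}{10^4 k} cost(P,C)$, so $cost(P,\widehat C) < \bigl(1 - \tfrac{\sqrt\beta}{10^4 k}\bigr) cost(P,C)$. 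The $L$-good case is identical with $c$ replaced by $t = \argmin_{c \in L} \reassign(P,C,c)$, using the bound $cost(P \setminus Q_{c^*}, C \setminus \{t\}) \le cost(P,C) + \reassign(P,C,t) - cost(Q_{c^*},C)$ (valid since deleting a center cannot decrease $cost(Q_{c^*},\cdot)$) and the definition of $L$-$\sqrt\beta$-good. In either case a strictly cost-decreasing swap exists, so \localsearch{} executes one at least as good, giving $cost(P,C') \le cost(P,\widehat C) < \bigl(1 - \tfrac{\sqrt\beta}{10^4 k}\bigr) cost(P,C) \le \bigl(1 - \tfrac{\min\{\sqrt[3]{\alpha}, k^{0.15}\}}{2\cdot 10^4 k}\bigr) cost(P,C)$, where the last inequality uses $\sqrt\beta \ge \tfrac12 \min\{\sqrt[3]{\alpha}, k^{0.15}\}$ from the first step. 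Since $\mathcal E$ has probability at least $1/2000$, this proves the lemma.

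The real work sits in \cref{prop:M-good_are_big_fraction} and \cref{prop:L-good_are_big_fraction}, which I am assuming; granting those, the delicate bookkeeping that remains is twofold. First, one must commit to a single legal $\beta$ that simultaneously satisfies $4 \le \beta \le \alpha^{2/3}$ and $\sqrt\beta \gtrsim \min\{\sqrt[3]{\alpha}, k^{0.15}\}$, so the resulting multiplicative improvement matches the claimed factor while every invoked lemma stays applicable. Second, one must get the swap inequality exactly right in both the matched and the lonely case, i.e., bounding the cost of the points outside $Q_{c^*}$ by the appropriate $\reassign$ term — which is precisely the form wired into \cref{def:good}, so the definition does most of the algebra for us.
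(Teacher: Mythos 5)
Your proposal is correct and follows essentially the same route as the paper's proof: commit to the largest legal $\beta$ below $\min\{\alpha^{2/3},k^{0.3}\}$ (so $\sqrt\beta \geq \tfrac12\min\{\sqrt[3]{\alpha},k^{0.15}\}$), split on whether the matched clusters carry at least half the mass, invoke \cref{prop:M-good_are_big_fraction} or \cref{prop:L-good_are_big_fraction} accordingly, apply \cref{prop:Good_Prob_To_Sample_Center} to get a well-placed sample with joint probability at least $\tfrac{1}{500}\cdot\tfrac{2}{5}=\tfrac{1}{1250}\geq\tfrac{1}{2000}$, and exhibit the swap with $c$ (matched case) or $t=\argmin_{c\in L}\reassign(P,C,c)$ (lonely case). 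The only additions beyond the paper's text are cosmetic (noting $L\neq\emptyset$ in Case 2, spelling out the $\reassign$ arithmetic), and they are all correct.
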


\begin{proof}
	Pick a legal $\beta \in \mathcal{B}$ such that $\frac{1}{2} \min \left\{ k^{0.3}, \alpha^{2/3} \right\} \leq \beta < \min \left\{ k^{0.3}, \alpha^{2/3} \right\}$.
	We define $M$ and $L$ candidate centers as in \cref{def:HL} and consider the following two cases separately:
	\begin{enumerate}
		\item $\sum_{(c,c^*) \in M} cost(Q_{c^*},C) \geq \frac{1}{2} \cdot cost(P,C)$
		\item $\sum_{(c,c^*) \in M} cost(Q_{c^*},C) < \frac{1}{2} \cdot cost(P,C)$
	\end{enumerate}

	Let $q$ be the $D^2$-sampled point and $c \in C$ be some current candidate center, which we will define later in each case.
	In both cases (1) and (2), we will show that the pair of points $(q,c)$ will fulfill the condition $cost(P, C \cup \{q\} \setminus \{c\}) \le \left( 1 - \sqrt\beta / \left( 10^4\cdot k \right) \right) \cdot cost(P,C)$ with some constant probability.
	The claim follows since the algorithm takes the $c \in C$ that decreases the cost the most and swaps it with the $D^2$-sampled point $q$.

	\paragraph{Case (1):} $\sum_{(c,c^*) \in M} cost(Q_{c^*},C) \geq \frac{1}{2} \cdot cost(P,C)$

	\Cref{prop:M-good_are_big_fraction} tells us that we sample from a $M$-$\sqrt\beta$-good cluster with probability at least $1 / 500$. Denote this cluster as $Q_{c^*}$.
	Then, by \cref{prop:Good_Prob_To_Sample_Center}, the $D^2$-sampled point $q$ satisfies $cost(Q_{c^*}, q) \le 100 \cdot cost(Q_{c^*}, \mu_{Q_{c^*}})$ with probability at least $2 / 5$.
	Jointly, with probability at least $2 / 2500$, we $D^2$-sampled a ``good'' point $q \in Q_{c^*}$ where $(c,c^*) \in M$ and $c^* \in M$-$\sqrt\beta$-good, so
	\begin{align*}
		     & \; cost(P, C \cup \{q\} \setminus \{c\})                                                           \\
		=    & \; cost(P,C) - \left( cost(P,C) - cost(P, C \cup \{q\} \setminus \{c\}) \right)                    \\
		\leq & \; cost(P,C) - \Big((cost(P \setminus Q_{c^*},C) + cost(Q_{c^*},C))                                \\
		     & \; - (cost(P \setminus Q_{c^*}, C \setminus \{c\}) + cost(Q_{c^*}, q)) \Big)                       \\
		=    & \; cost(P,C) - \Big(cost(Q_{c^*},C) - (cost(P \setminus Q_{c^*}, C \setminus \{c\})                \\
		     & \; - cost(P \setminus Q_{c^*},C)) - cost(Q_{c^*}, q) \Big)                                         \\
		\le  & \; cost(P,C) - \Big( cost(Q_{c^*}, C) - \reassign(P,C,c)                                           \\
		     & \; - 100 \cdot cost(Q_{c^*}, \mu_{Q_{c^*}}) \Big)                                                  \\
		\le  & \; cost(P,C) - \frac{\sqrt\beta}{10^4\cdot k}\cdot cost(P,C)                                       \\
		=    & \; \left( 1 - \frac{\sqrt\beta}{10^4\cdot k} \right) \cdot cost(P,C)                               \\
		\le  & \; \left( 1 - \frac{\min\{\sqrt[3]{\alpha},k^{0.15}\}}{2\cdot 10^4\cdot k} \right) \cdot cost(P,C)
	\end{align*}

	\paragraph{Case (2):} $\sum_{(c,c^*) \in M} cost(Q_{c^*},C) < \frac{1}{2} \cdot cost(P,C)$

	This is the same as Case (1), but we use \Cref{prop:L-good_are_big_fraction} instead of \Cref{prop:M-good_are_big_fraction}.
\end{proof}

From this lemma, we can conclude that if the current approximation factor is very high, we drastically decrease it within just a few steps.
In particular, we can show that if we start with an approximation guarantee that is no worse than $\exp(k^{0.1})$, we can decrease it to just $O(1)$ within $\eps k$ steps, with probability $1 - \exp(-\Omega(k^{0.1}))$.
By Markov's inequality, we know that the probability of having an approximation guarantee that is worse than $\exp(k^{0.1})$ is at most $\exp(-\Omega(k^{0.1}))$.
Our main theorem\footnote{The cube-root of $\alpha$ in \Cref{lem:lemma3} is precisely why we obtain an approximation factor of $\O(1 / \eps^3)$ after $\eps k$ \localsearch{} steps, with high probability in $k$.} now follows:

\mainthm*

\subsection{Concluding remarks}

\paragraph{Expectation versus high probability}

An approximation guarantee in expectation only implies (via Markov inequality) that with a constant probability we get a constant approximation.
So, our result is stronger as we get a constant approximation of the optimum cost with a probability of at least $1 - \exp(-\Omega(k^{0.1}))$.
To recover a guarantee in expectation, we can run the algorithm twice\footnote{It is not unusual to run \kmp{} multiple times in practice. e.g. See documentation of \texttt{sklearn.cluster.KMeans}.}:
Let $C_1$ be the solution obtained by running \kmp{} plus \localsearch{}, let $C_2$ be the output of another independent run of \kmp{}, and let $\mathcal{E}$ be the event that \localsearch{} outputs an $\O(1)$-approximation.
Then, the expected cost of $\min\{cost(C_1), cost(C_2)\}$ is
\begin{align*}
	     & \; \E[\min\{cost(C_1), cost(C_2)\}]                                          \\
	\leq & \; \Pr[\mathcal{E}] \cdot \O(1) + (1 - \Pr[\mathcal{E}]) \cdot \E[cost(C_2)] \\
	\leq & \;\left(1 - \exp(-\Omega(k^{0.1})) \right) \cdot \O(1)                       \\
	     & \; + \exp(-\Omega(k^{0.1})) \cdot \O(\log k)                                 \\
	\in  & \; \O(1)
\end{align*}

\paragraph{Running time}
\label{para:runningtime}
On a $d$-dimensional data set consisting of $n$ data points, a naive implementation of \kmp{} has time complexity $\O(dnk^2)$ and space complexity $\O(dn)$.
This running time can be improved to $\O(dnk)$ if each data point tracks its distance to the current closest candidate center in $C$.
This is because $D^2$-sampling and subsequent updating this data structure can be done in $\O(dn)$ time for each iteration of \kmp{}.

\localsearch{} can be implemented in a similar manner where each data point remembers its distance to the closest two candidate centers.
Lattanzi and Sohler \cite{lattanzi2019better} argue that if \localsearch{} deletes clusters with an average size of $\O(n/k)$, then an iteration of \localsearch{} can be performed in an amortized running time of $\O(dn)$.

However, in the worst case, each iteration of \localsearch{} can still take $\O(dnk)$ time.
A way to provably improve the worst case complexity is to use more memory.
With $\O(dnk)$ space, each data point can store distances to all $k$ centers in a binary search tree.
Then, each step can be implemented in $\O(dn \log k)$ time, as updating a binary search tree requires $\O(\log k)$ time.

\section{Acknowledgements}

We are grateful to Zalan Borsos, Mohsen Ghaffari, and Andreas Krause for their help and discussing this problem with us.
In particular, we thank Mohsen Ghaffari for giving us feedback on previous versions of this paper.

\bibliography{ref}
\bibliographystyle{icml2020}

\newpage
\onecolumn
\appendix

\section{Missing proofs}
Here we collect missing proofs of the results left unproven in the main body.

\subsection{Concentration Inequalities}
We will use the standard Chernoff Bound:
\begin{theorem}[Chernoff bound]
	\label{lem:chernoff}
	Let $X_1, X_2, \ldots, X_n$ be independent binary random variables.
	Let $X = \sum_{i=1}^n X_i$ be their sum and $\mu = \E(X)$ be the sum's expected value.
	For any $\delta \in [0,1]$,
	$\P(X \leq (1-\delta)\mu) \leq \exp \left( - \delta^2 \mu / 2 \right)$ and $\P(X \geq (1+\delta)\mu) \leq \exp \left(- \delta^2 \mu / 3 \right)$.
	Additionally, for any $\delta \geq 1$, we have
	$\P(X \geq (1+\delta)\mu) \leq \exp \left(- \delta \mu / 3 \right)$.
\end{theorem}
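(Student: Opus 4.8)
The plan is to prove all three inequalities by the standard exponential moment method. Write $p_i = \P(X_i = 1) = \E(X_i)$ so that $\mu = \sum_{i=1}^n p_i$, and note that the whole argument rests on a single moment generating function estimate: for every real $t$,
\[
	\E\bigl(e^{tX}\bigr) = \prod_{i=1}^n \E\bigl(e^{tX_i}\bigr) = \prod_{i=1}^n \bigl( 1 + p_i(e^t - 1) \bigr) \le \prod_{i=1}^n \exp\bigl( p_i(e^t - 1) \bigr) = \exp\bigl( (e^t - 1)\mu \bigr),
\]
where the first equality uses independence, the second is the direct computation of the MGF of a $\{0,1\}$-valued variable, and the inequality is $1 + x \le e^x$ applied with $x = p_i(e^t-1)$ (valid regardless of the sign of $t$).

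For the two upper-tail bounds I would fix $t > 0$ and apply Markov's inequality to $e^{tX}$:
\[
	\P\bigl( X \ge (1+\delta)\mu \bigr) = \P\bigl( e^{tX} \ge e^{t(1+\delta)\mu} \bigr) \le e^{-t(1+\delta)\mu}\,\E\bigl(e^{tX}\bigr) \le \exp\Bigl( \bigl( e^t - 1 - t(1+\delta) \bigr)\mu \Bigr).
\]
The exponent is minimized at $t = \ln(1+\delta) > 0$, which gives the familiar
\[
	\P\bigl( X \ge (1+\delta)\mu \bigr) \le \left( \frac{e^\delta}{(1+\delta)^{1+\delta}} \right)^{\mu}.
\]
It then remains to verify the scalar inequalities $e^\delta/(1+\delta)^{1+\delta} \le e^{-\delta^2/3}$ for $\delta \in [0,1]$ and $e^\delta/(1+\delta)^{1+\delta} \le e^{-\delta/3}$ for $\delta \ge 1$. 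Taking logarithms, these become $\delta - (1+\delta)\ln(1+\delta) \le -\delta^2/3$ and $\delta - (1+\delta)\ln(1+\delta) \le -\delta/3$ respectively; each follows from a short calculus check — the gap vanishes at the relevant left endpoint, and differentiating once or twice pins down the sign of its derivative on the interval in question.

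For the lower-tail bound I would run the same scheme on $-X$ with $t > 0$: Markov applied to $e^{-tX}$ gives $\P( X \le (1-\delta)\mu ) \le e^{t(1-\delta)\mu}\,\E(e^{-tX}) \le \exp\bigl( (e^{-t} - 1 + t(1-\delta))\mu \bigr)$, and the optimal choice $t = -\ln(1-\delta) > 0$ yields $\P( X \le (1-\delta)\mu ) \le \bigl( e^{-\delta}/(1-\delta)^{1-\delta} \bigr)^{\mu}$. One final elementary estimate, $-\delta - (1-\delta)\ln(1-\delta) \le -\delta^2/2$ on $[0,1)$ (equality at $\delta = 0$; the second derivative of the gap is nonnegative there), produces the claimed bound $\exp(-\delta^2\mu/2)$, and the endpoint $\delta = 1$ is handled by continuity (or directly, since $X \ge 0$ forces $\P(X \le 0) = \prod_i(1-p_i) \le e^{-\mu}$).

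The only genuinely fiddly part is the bookkeeping in those three one-variable inequalities that convert $e^\delta/(1+\delta)^{1+\delta}$ and $e^{-\delta}/(1-\delta)^{1-\delta}$ into the clean sub-Gaussian and sub-exponential forms — in particular, squeezing out exactly the constants $2$ and $3$ in the three stated regimes rather than weaker ones. Everything else is a mechanical combination of Markov's inequality, independence, and $1 + x \le e^x$, so I do not anticipate any conceptual obstacle.
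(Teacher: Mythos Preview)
Your proof is correct and is the standard exponential-moment derivation of the multiplicative Chernoff bounds. The paper, however, does not prove this theorem at all: it is stated in the appendix as a standard concentration inequality to be used as a black box, with no accompanying argument. So there is nothing to compare against; you have supplied a full proof where the paper simply quotes the result.
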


\subsection{Proof of \Cref{lem:quality_deteriorates_slowly}}

We will use \Cref{cor:I_like_Sebastians_chocolate} together with a Chernoff Bound to prove  \Cref{prop:kmeans++_doesnt_screwup}, which we will then use to prove \Cref{lem:quality_deteriorates_slowly}.
\begin{proposition}
	\label{prop:kmeans++_doesnt_screwup}
	After running \kmp{} (for $k$ steps), let $C$ denote the set of candidate centers and let $\alpha = \frac{cost(P,C)}{OPT}\geq 1$ be the approximation factor.
	Then, with probability at least $1 - \exp(-\Omega(k^{0.1}))$, there are at most $\left( \frac{10 k}{\sqrt \beta} \right)$ $\beta$-unsettled clusters for any legal $\beta \leq \alpha^{2/3}$.
\end{proposition}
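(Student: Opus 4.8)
The plan is to fix a single legal $\beta\in\mathcal B$, bound from below the number of $\beta$-settled $OPT$ clusters produced by \kmp{}, and then union bound over the $O(\log k)$ legal choices of $\beta$. I would write $C_1\subseteq C_2\subseteq\cdots\subseteq C_k$ for the candidate sets after each of the $k$ rounds and set $\alpha_j = cost(P,C_j)/OPT$. Two easy monotonicity facts drive the argument: (i) inserting a center never increases $cost(P,\cdot)$, so $\alpha_1\ge\alpha_2\ge\cdots\ge\alpha_k=\alpha$; and (ii) $\beta$-settledness of an $OPT$ cluster is preserved as $C$ grows, since it only requires $R_{Q,\beta}\cap C\neq\emptyset$. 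Combining (i) with the fact that every legal $\beta$ is at least $8$, on the event $\{\beta\le\alpha^{2/3}\}$ we have $\beta\le\alpha_j^{2/3}$ and $\alpha_j\ge\beta^{3/2}>10$ at every round $j$, so the hypotheses of \Cref{cor:I_like_Sebastians_chocolate} are met at the start of each round.

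For $i=2,\dots,k$ I would define the indicator $Z_i$ to equal $1$ if \emph{either} $\beta>\alpha_{i-1}^{2/3}$ \emph{or} the point $D^2$-sampled in round $i$ lies in a cluster that is $\beta$-unsettled with respect to $C_{i-1}$ and becomes $\beta$-settled with respect to $C_i$. Conditioning on the history $\mathcal F_{i-1}$ just before round $i$: in the first case $Z_i=1$ deterministically, while in the second case $C_{i-1}$ satisfies the hypotheses of \Cref{cor:I_like_Sebastians_chocolate}, which says exactly that $\Pr[Z_i=0\mid\mathcal F_{i-1}]\le 8/\sqrt\beta$. Thus $W:=\sum_{i=2}^{k}(1-Z_i)$ is stochastically dominated by a $\mathrm{Binomial}(k,8/\sqrt\beta)$ variable, and applying \Cref{lem:chernoff} to that dominating binomial gives $\Pr[W\ge 9k/\sqrt\beta]\le\exp(-\Omega(k/\sqrt\beta))=\exp(-\Omega(k^{0.85}))$, where I used $\sqrt\beta\le k^{0.15}$ for legal $\beta$.

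Finally, on the event $\{\beta\le\alpha^{2/3}\}$ the first disjunct in $Z_i$ never fires, so each round with $Z_i=1$ turns a \emph{distinct} previously-$\beta$-unsettled cluster $\beta$-settled — distinctness is immediate from monotonicity fact (ii), as a cluster settled in round $i$ is still settled with respect to $C_{i'-1}$ for every later round $i'$. Hence at least $(k-1)-W$ clusters are $\beta$-settled after \kmp{}, i.e.\ at most $W+1$ are $\beta$-unsettled; since $k/\sqrt\beta\ge k^{0.85}\ge 1$, the event $\{\beta\le\alpha^{2/3}\}\cap\{\#\{\beta\text{-unsettled clusters}\}>10k/\sqrt\beta\}$ lies inside $\{W\ge 9k/\sqrt\beta\}$ and so has probability $\exp(-\Omega(k^{0.85}))$. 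A union bound over the at most $0.3\log k$ legal $\beta$ contributes only a $\log k$ factor and produces the stated $1-\exp(-\Omega(k^{0.1}))$ bound. The main obstacle I expect is the coupling issue that the set of ``relevant'' $\beta$ is governed by the random final ratio $\alpha$ while the running ratio $\alpha_{i-1}$ drifts during \kmp{}; monotonicity fact (i) together with the built-in ``$\beta>\alpha_{i-1}^{2/3}$'' escape clause in $Z_i$ is designed precisely to sidestep this, and the only remaining delicate point is that the $Z_i$ are dependent, so the concentration step must pass through stochastic domination rather than a sum of independent indicators.
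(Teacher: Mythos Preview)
Your proposal is correct and follows essentially the same approach as the paper: fix a legal $\beta$, define failure indicators whose conditional probability is bounded by $8/\sqrt\beta$ via \Cref{cor:I_like_Sebastians_chocolate}, pass to a dominating binomial, apply Chernoff to get an $\exp(-\Omega(k^{0.85}))$ tail, and union bound over the $O(\log k)$ legal $\beta$. Your $1-Z_i$ coincides with the paper's indicator $X_i$, and you are slightly more explicit than the paper about the monotonicity of settledness (to justify that distinct successful rounds settle distinct clusters) and about starting the sum at $i=2$; these are cosmetic differences only.
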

\begin{proof}
	Define $\alpha_i = \frac{cost(P,C_{i})}{OPT}$ as the approximation factor after the $i$-th step of \kmp{}.
	Fix an arbitrary legal $\beta$.
	Since $\alpha_1 \ge \alpha_2 \ge \dots \ge \alpha_k = \alpha$, $\beta \le \alpha^{2/3}$ implies $\beta \le \alpha_i^{2/3}$ for any $i$.
	Note that if $\beta > \alpha^{2/3}$, then the statement vacuously holds.

	Let $X_i$ be an indicator random variable which is $1$ if $\beta \leq \alpha_{i-1}^{2/3}$ and we \emph{do not} increase the number of $\beta$-settled clusters by one in the $i$-th iteration.
	Then, $\E[X_i|X_1, \ldots, X_{i-1}] \leq \frac{8}{\sqrt{\beta}}$ for any $X_1, \dots, X_{i-1}$.
	This is because if $\beta \le \alpha_{i-1}^{2/3}$, then \cref{cor:I_like_Sebastians_chocolate} tells us that in the $i$-th iteration, with probability at least $1 - \frac{8}{\sqrt\beta}$, the new sampled point is from a $\beta$-unsettled cluster $Q$ \emph{and} makes it $\beta$-settled, so the number of $\beta$-settled clusters increases by one.

	Define random variables $X = X_1 + \dots + X_k$ and $X' = X'_1 + \dots + X'_k$, where each $X'_i$ is an independent Bernoulli random variable with success probability $\frac{8}{\sqrt\beta}$.
	We see that $\E[X'] = \frac{8k}{\sqrt\beta}$ and $X$ is stochastically dominated by $X'$.
	By \cref{lem:chernoff},
	\[
		\P \left( X \ge \frac{10 k}{\sqrt\beta} \right)
		\le \P \left( X' \ge \frac{10 k}{\sqrt\beta} \right)
		= \P \left( X' \ge \frac{5}{4} \cdot \E[X'] \right)
		\le \e^{-\frac{\E[X']^2}{3 \cdot 16}}
		\le \e^{-\Theta(\frac{k}{\sqrt{\beta}})}
		\le \e^{-\Theta \left( k^{0.85} \right)}
	\]
	The last inequality holds because $\beta \le k^{0.3}$ for any legal $\beta \in \mathcal{B}$.
	Since we start with $k$ $\beta$-unsettled clusters, if $X \le \frac{10k}{\sqrt\beta}$ and $\beta \leq \alpha^{2/3}$, then the number of $\beta$-unsettled clusters at the end is at most $k - (k - \frac{10k}{\sqrt\beta}) = \frac{10k}{\sqrt\beta}$.
	To complete the proof, we union bound over all $\O(\log k)$ possible values for legal $\beta \in \mathcal{B}$.
\end{proof}

\lemqualitygood*
\begin{proof}
	Define $\alpha_0$ as the approximation factor after \kmp{} and $\alpha_i$ as the approximation factor after running the local search for additional $i$ steps.
	Fix an arbitrary legal $\beta$.
	Since $\alpha_0 \ge \alpha_1 \ge \dots \ge \alpha_\ell = \alpha$, $\beta \le \alpha^{2/3}$ implies $\beta \le \alpha_i^{2/3}$ for any $i$.
	Note that if $\beta > \alpha^{2/3}$, then the statement vacuously holds.

	Let $X_i$ be an indicator random variable which is $1$ if $\beta \leq \alpha_{i-1}^{2/3}$ and the number of $\beta$-unsettled clusters increases in the $i$-th iteration of \localsearch{}.
	Then, $\E[X_i \mid X_1, \dots, X_{i-1}] \le \frac{8}{\sqrt\beta}$.
	This is because if $\beta \le \alpha_{i-1}^{2/3}$ in the $i$-th iteration, then \cref{cor:I_like_Sebastians_chocolate} tells us that, with probability at least $1 - \frac{8}{\sqrt{\beta}}$, the new sampled point $p_i$ is from a $\beta$-unsettled cluster $Q$ \emph{and} adding $p_i$ to $C$ would make $Q$ $\beta$-settled.
	By definition of $\beta$-settled, adding or removing a single point from $C$ can only decrease or increase the number of $\beta$-unsettled clusters by at most one.
	Thus, if \localsearch{} decides to swap an existing point in $C$ for $p_i$, the number of $\beta$-unsettled clusters does not increase.

	Define random variables $X = X_1 + \dots + X_l$ and $X' = X'_1 + \dots + X'_l$, where each $X'_i$ is an independent Bernoulli random variable with success probability $\frac{8}{\sqrt\beta}$.
	We see that $\E[X'] = \frac{8l}{\sqrt\beta}$ and $X$ is stochastically dominated by $X'$.
	By \cref{lem:chernoff},
	\[
		\P \left( X \ge \frac{20k}{\sqrt\beta} \right)
		\le \P \left( X' \ge \frac{20k}{\sqrt\beta} \right)
		= \P \left( X' \ge \frac{5k}{2l} \cdot \E[X'] \right)
		\le \e^{-\frac{\E[X'] \frac{3k}{2\ell}}{3}}
		\le \e^{-\Theta(\frac{k}{\sqrt{\beta}})}
		\le \e^{-\Theta \left( k^{0.85} \right)}
	\]
	The last inequality is because $\beta \le k^{0.3}$ for any legal $\beta \in \mathcal{B}$.
	\Cref{prop:kmeans++_doesnt_screwup} tells us at the start of \localsearch, with probability at least $1 - e^{-\Omega(k^{0.1})}$, there are at most $\frac{10k}{\sqrt\beta}$ $\beta$-unsettled clusters.
	If $X \le \frac{20k}{\sqrt\beta}$ and $\beta \leq \alpha^{2/3}$, then the number of $\beta$-unsettled clusters after $l$ \localsearch{} steps is at most $\frac{10k}{\sqrt\beta} + \frac{20k}{\sqrt\beta}   \leq \frac{30k}{\sqrt\beta}$.
	To complete the proof, we union bound over all $\O(\log k)$ possible values for legal $\beta \in \mathcal{B}$.
\end{proof}

\subsection{Proof of \Cref{prop:L-good_are_big_fraction}}

We proof \cref{prop:L-good_are_big_fraction}, which is similar to the proof of \cref{prop:M-good_are_big_fraction}.

\lemlgoodfraction*

We abuse notation and denote with $C^* \setminus M$ the set of optimal cluster centers which don't have a mate. That is, the point $c \in C$ which $c^*$ is assigned to is assigned to more than one center of $C^*$. Informally, the proposition states that if points in $C^* \setminus M$ have sufficiently large probability mass, then the probability mass on $L$-$\sqrt{\beta}$-good clusters is a constant fraction of the total probability mass.

\begin{proof}

	We show that the probability mass is large by upper bounding the probability mass on its negation.
	To do this, we partition the summation of $c^* \not\in L$-$\sqrt\beta$-good into $\beta$-settled and $\beta$-unsettled:
	\[
		\sum_{{\substack{c^* \in C^* \setminus M,\\ \text{$c^* \not\in L$-$\sqrt\beta$-good}}}} cost(Q_{c^*}, C)
		= \sum_{{\substack{c^* \in C^* \setminus M,\\ \text{$c^* \not\in L$-$\sqrt\beta$-good},\\ \text{$c^*$ is $\beta$-settled}}}} cost(Q_{c^*}, C) + \sum_{{\substack{c^* \in C^* \setminus M,\\ \text{$c^* \not\in L$-$\sqrt\beta$-good},\\ \text{$c^*$ is $\beta$-unsettled}}}} cost(Q_{c^*}, C)
	\]
	From \cref{prop:beta-settled}, we know that $C$ pays no more than $ (\beta +1) \cdot cost(P,C^*)$ for all $\beta$-settled clusters.
	So,
	\[
		\sum_{{\substack{c^* \in C^* \setminus M,\\ \text{$c^* \not\in L$-$\sqrt\beta$-good},\\ \text{$c^*$ is $\beta$-settled}}}} cost(Q_{c^*}, C)
		\leq (\beta+1) \cdot cost(P,C^*)
		\leq (\alpha^{2/3}+1) \cdot cost(P,C^*)
		\leq \frac{2\alpha^{2/3}}{\alpha} \cdot cost(P,C)
		\leq \frac{1}{500} \cdot cost(P,C)
	\]
	It remains to bound $\sum_{{\substack{c^* \in C^* \setminus M,\\ \text{$c^* \not\in L$-$\sqrt\beta$-good},\\ \text{$c^*$ is $\beta$-unsettled}}}} cost(Q_{c^*}, C)$.
	Recall that $P$ is the set of \emph{all} points and $Q_{c^*} \subseteq P$ for any $c^* \in C^*$.
	For $c^* \in C^* \setminus M$, if $c^* \not\in L$-$\sqrt\beta$-good, then for \emph{any} $c \in L$, $cost(Q_{c^*}, C) \le \reassign(P,C,c) + 100 \cdot cost(Q_{c^*},c^*) + \frac{\sqrt\beta}{10^4k} \cdot cost(P,C)$.
	\begin{align*}
		    & \sum_{{\substack{c^* \in C^* \setminus M,                                                                                                                                          \\ \text{$c^* \not\in L$-$\sqrt\beta$-good},\\ \text{$c^*$ is $\beta$-unsettled}}}} cost(Q_{c^*}, C)\\
		\le & \sum_{{\substack{c^* \in C^* \setminus M,                                                                                                                                          \\ \text{$c^* \not\in L$-$\sqrt\beta$-good},\\ \text{$c^*$ is $\beta$-unsettled}}}} \left( \min_{c\in L}\reassign(P,C,c) + 100 \cdot cost(Q_{c^*},c^*) + \frac{\sqrt\beta}{10^4k} \cdot cost(P,C) \right) && \text{\cref{def:good}}\\
		\le & \left( \sum_{{\substack{c^* \in C^* \setminus M,                                                                                                                                   \\ \text{$c^* \not\in L$-$\sqrt\beta$-good},\\ \text{$c^*$ is $\beta$-unsettled}}}} \min_{c\in L} \reassign(P,C,c) \right) + 100 \cdot cost(P,C^*) + \frac{30}{10^4} \cdot cost(P,C) && \text{$\leq \frac{30 k}{\sqrt\beta}$ $\beta$-unsettled}\\
		\le & (k- |M|) \min_{c\in L} \reassign(P,C,c) + 100 \cdot cost(P,C^*) + \frac{30}{10^4} \cdot cost(P,C)                              &  & \text{Sum over $\le|C^*\setminus M|$ elements} \\
		\le & 2|L| \min_{c\in L} \reassign(P,C,c)  + 100 \cdot cost(P,C^*) + \frac{30}{10^4} \cdot cost(P,C)                                 &  & \text{\cref{obs:double_counting_L}}            \\
		\le & 2\sum_{c \in L} \reassign(P,C,c)  + 100 \cdot cost(P,C^*) + \frac{30}{10^4} \cdot cost(P,C)                                    &  &                                                \\
		\le & 2\left( \frac{21}{100} \cdot cost(P,C) + 24 \cdot cost(P,C^*)\right) + 100 \cdot cost(P,C^*) + \frac{30}{10^4} \cdot cost(P,C) &  &
		\text{\cref{prop:bound-reassignment-cost}}                                                                                                                                               \\
		\le & \frac{450}{1000} \cdot cost(P,C)                                                                                               &  & cost(P,C) \ge 10^9 \cdot cost(P,C^*)
	\end{align*}
	Thus, $\sum_{{\substack{c^* \in C^* \setminus M,\\ \text{$c^* \in L$-$\beta$-good}}}} cost(Q_{c^*}, C) \ge (\frac{1}{2} - \frac{1}{500} - \frac{450}{1000}) \cdot cost(P,C) \ge \frac{1}{500} \cdot cost(P,C)$.
\end{proof}

\subsection{Proof of \Cref{thm:main}}
Before we prove \Cref{thm:main}, we will introduce the notion of a \emph{successful} iteration of local search, and argue under which conditions we can give guarantees on the probability that an iteration is successful.
In \Cref{prop:successes}, we give a lower bound on the number of successful rounds that we expect to see, and in \Cref{prop:phases1} and \Cref{prop:phases2} we show that after enough successful rounds, a significant decrease in cost is achieved.
This finally enables us to prove \Cref{thm:main}.

For our analysis, we will require that the following two events hold before every step of \localsearch{}.
\begin{enumerate}
	\item [(I)] If we start with an approximation factor of $\alpha$, then for every legal $\beta \leq \alpha^{2/3}$, there are at most $\frac{30k}{\sqrt{\beta}}$ $\beta$-unsettled clusters.
	      Assuming we perform $\ell \leq k$ steps of local search, we can assume that this is true by \Cref{lem:quality_deteriorates_slowly}, and a union bound over all of the at most $k$ steps, with probability at least $1 - \exp(-\Omega(k^{0.1}))$.
	\item [(II)] We will also assume that the approximation factor after the execution of \kmp{} (which can only improve) is at most $\exp(k^{0.1})$.
	      As the expected cost is $\O(\log k)$, this occurs with probability at least $1 - \exp(- \Omega(k^{0.1}))$, by a simple application of Markov's inequality.
\end{enumerate}
Both statements (I) and (II) jointly hold with probability at least $1 - \exp(-\Omega(k^{0.1}))$.

Letting $\alpha_{i}$ denote the approximation factor after the $i$-th local search step, we define a \emph{successful} local search step as follows:
\begin{definition}
	The $i$-th local search step is \emph{successful} if either of the following holds:
	(A) $\alpha_{i-1} \leq 10^9$, or
	(B) $\alpha_i \leq \left( 1 - \frac{\min \{ \sqrt[3]{\alpha_{i-1}},\, k^{0.15} \}}{2 \cdot 10^4 k} \right) \cdot \alpha_{i-1}$
\end{definition}

Note, as we condition on (I) and (II), we cannot directly apply \Cref{lem:lemma3} to show that an iteration is successful with probability at least $1/2000$.
However, the following is still true:
\begin{observation}
	\label{obs:lemma3}
	As $1 - \exp(-\Omega(k^{0.1})) \gg 1 - 1/4000$, the probability that an iteration of local search is successful, after conditioning on (I) and (II), is still at least $1/4000$.
\end{observation}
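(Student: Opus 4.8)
The plan is to exploit that the conditioning event $G$, the conjunction of events (I) and (II), has probability $1 - \exp(-\Omega(k^{0.1}))$, so conditioning on it can drop any probability by at most $\exp(-\Omega(k^{0.1}))$; since the \emph{unconditional} probability that a step is successful is essentially $1/2000$ by \cref{lem:lemma3}, at least $1/4000$ survives.

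First I would prove an almost-unconditional success bound. Let $S_i$ denote the event that the $i$-th local search step is successful, and let $I_i$ denote the event that, just before step $i$ --- equivalently, after $i-1 \le k$ steps of \localsearch{} --- for every legal $\beta \le \alpha_{i-1}^{2/3}$ there are at most $30k/\sqrt{\beta}$ clusters that are $\beta$-unsettled. The key point is that $I_i$ is determined by the candidate set present before step $i$, hence independent of the fresh $D^2$-sample drawn in step $i$. Condition on any realization of the pre-step state satisfying $I_i$: if $\alpha_{i-1} \le 10^9$ the step is successful by case (A) of the definition; otherwise $\alpha_{i-1} \ge 10^9$, the hypothesis of \cref{lem:lemma3} is exactly $I_i$, and that lemma guarantees with probability at least $1/2000$ the multiplicative cost decrease required by case (B). Averaging over these realizations gives $\Pr[S_i \mid I_i] \ge 1/2000$, and since $\Pr[I_i] \ge 1 - \exp(-\Omega(k^{0.1}))$ by \cref{lem:quality_deteriorates_slowly} (applied with $\ell = i-1 \le k$), I obtain $\Pr[S_i] \ge \Pr[S_i \mid I_i]\,\Pr[I_i] \ge 1/2000 - \exp(-\Omega(k^{0.1}))$.

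Next I would pay for the conditioning via the elementary bound, valid whenever $\Pr[G] > 0$, that $\Pr[S_i \mid G] = \Pr[S_i \cap G]/\Pr[G] \ge \Pr[S_i \cap G] \ge \Pr[S_i] - \Pr[\neg G]$. Plugging in $\Pr[\neg G] \le \exp(-\Omega(k^{0.1}))$ and the bound from the previous paragraph yields $\Pr[S_i \mid G] \ge 1/2000 - 2\exp(-\Omega(k^{0.1}))$, which is at least $1/4000$ once $k$ is large enough that $\exp(-\Omega(k^{0.1})) \le 1/8000$ --- guaranteed by the standing hypothesis $k \in \Omega(1/\eps^{20})$.

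The single genuine subtlety, and precisely why \cref{lem:lemma3} cannot be quoted verbatim (as the preamble to this observation already notes), is that $G$ also constrains the \emph{future} steps $j > i$, whose outcomes depend on the result of step $i$; hence $S_i$ and $G$ need not be conditionally independent even after fixing the pre-step state. The inequality above circumvents this by never conditioning on $G$ at all, only subtracting $\Pr[\neg G]$ --- exactly the robustness that the remark ``$1 - \exp(-\Omega(k^{0.1})) \gg 1 - 1/4000$'' is invoking. I do not anticipate any further obstacle.
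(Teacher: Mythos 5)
Your argument is correct and is essentially the paper's one-line justification made rigorous (the paper gives no formal proof of this observation beyond the parenthetical remark): lower-bound the unconditional success probability $\Pr[S_i]$ via \cref{lem:lemma3} and \cref{lem:quality_deteriorates_slowly}, then pay for conditioning on the high-probability event $G$ with the elementary bound $\Pr[S_i \mid G] \ge \Pr[S_i] - \Pr[\neg G]$, absorbing the $\exp(-\Omega(k^{0.1}))$ slack into the factor-of-two gap between $1/2000$ and $1/4000$. You also correctly identify the genuine subtlety the paper leaves implicit --- that $G$ constrains future steps and hence is not measurable with respect to the pre-step-$i$ history, so one cannot naively invoke \cref{lem:lemma3} conditionally on $G$ --- and the subtraction bound is the right way to sidestep it.
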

Using this observation, we can now state the following:
\begin{proposition}
	\label{prop:successes}
	Assume that we run $\ell \leq k$ local search steps and that conditions (I) and (II) hold. Then, with a probability of at least $1 - \exp(-\Omega(\ell))$, we will have at least $\ell/ 8000$ successes.
\end{proposition}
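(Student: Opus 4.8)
The plan is to show that the number of successful local search steps among $\ell$ iterations stochastically dominates a sum of independent Bernoulli trials with success probability $1/4000$, and then apply a Chernoff bound. By \Cref{obs:lemma3}, conditioned on (I) and (II), each local search step is successful with probability at least $1/4000$, \emph{regardless of the outcomes of the previous steps}: if $\alpha_{i-1} \le 10^9$ then step $i$ is automatically successful (case (A)), and otherwise \Cref{lem:lemma3} applies (its hypotheses are exactly condition (I) together with $\alpha_{i-1} \ge 10^9$), giving success via case (B) with probability at least $1/2000 \ge 1/4000$ even after accounting for the conditioning on (I) and (II). Thus, letting $Y_i$ be the indicator that step $i$ is successful, we have $\E[Y_i \mid Y_1, \dots, Y_{i-1}] \ge 1/4000$ for every history.

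First I would make the stochastic domination precise: define independent Bernoulli random variables $Y'_1, \dots, Y'_\ell$ each with success probability $1/4000$, set $Y = \sum_{i=1}^\ell Y_i$ and $Y' = \sum_{i=1}^\ell Y'_i$, and observe that $Y$ stochastically dominates $Y'$ by the conditional lower bound above (this is the standard coupling argument, identical in form to the one used in the proof of \Cref{lem:quality_deteriorates_slowly}). Then $\E[Y'] = \ell/4000$, and by the lower-tail Chernoff bound in \Cref{lem:chernoff} with $\delta = 1/2$,
\[
	\P\!\left( Y \le \frac{\ell}{8000} \right)
	\le \P\!\left( Y' \le \frac{\ell}{8000} \right)
	= \P\!\left( Y' \le \tfrac{1}{2}\,\E[Y'] \right)
	\le \exp\!\left( -\frac{\E[Y']}{8} \right)
	= \exp\!\left( -\frac{\ell}{32000} \right)
	= \exp(-\Omega(\ell)).
\]
Hence with probability at least $1 - \exp(-\Omega(\ell))$ we see at least $\ell/8000$ successes, which is the claim.

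The only subtle point — and the main thing to get right — is the independence/conditioning structure: one must be careful that the $1/4000$ bound from \Cref{obs:lemma3} genuinely holds \emph{conditionally on the entire past} inside the event that (I) and (II) hold, not merely marginally. This works because (I) and (II) are properties fixed by the end of \kmp{} and the full run, so conditioning on them does not break the step-by-step supermartingale-type bound; the factor-of-two slack (from $1/2000$ to $1/4000$) is exactly what absorbs the $\exp(-\Omega(k^{0.1}))$ probability mass on which (I) or (II) might fail, as spelled out in \Cref{obs:lemma3}. Everything else is the routine Chernoff computation above, and no new ideas beyond the coupling are needed.
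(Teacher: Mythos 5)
Your proof is correct and follows essentially the same approach as the paper: use \Cref{obs:lemma3} to get the conditional lower bound $\E[Y_i \mid Y_1,\dots,Y_{i-1}] \ge 1/4000$, deduce stochastic domination by a sum of i.i.d.\ Bernoulli$(1/4000)$ variables, and apply the lower-tail Chernoff bound. Your write-up is a bit more explicit about the Chernoff constant and the conditioning subtlety, but the argument is the same.
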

\begin{proof}
	Let $X_i$ denote the indicator variable for the event that the $i$-th local search step is a success. Note that $X_1,X_2, \ldots, X_\ell$ are not independent.
	However, it is easy to check that \cref{obs:lemma3} holds, even if we additionally condition on arbitrary values for $X_1, X_2, \ldots, X_{i-1}$, or more specifically:
	$$\E[X_i|X_1,X_2, \ldots, X_{i-1},(I),(II)] \geq \frac{1}{4000}$$
	Thus, the number of successes stochastically dominates the random variable $X' = X'_1 + \cdots + X'_\ell$, where the $X'_i$s are independent Bernoulli variables that are one with probability $1/4000$.
	By a Chernoff bound, we can thus conclude that within $\ell$ rounds, less than $\ell/8000$ rounds are successful, with probability at most $\exp(-\Omega(l))$.
\end{proof}

\begin{proposition}
	\label{prop:phases1}
	Assume that the conditions (I) and (II) are fulfilled.
	Then, after $N_0 = 2 \cdot 10^4 \cdot k^{0.95}$ successful rounds, we obtain a clustering which is no worse than $k^{0.45}$-approximate, assuming $k^{0.45} > 10^9$.
\end{proposition}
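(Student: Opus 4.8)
The plan is a short proof by contradiction that exploits the monotonicity of the approximation factor under \localsearch{}. First I would record the key fact: each \localsearch{} step either strictly decreases $cost(P,C)$ or leaves $C$ untouched, and $cost(P,C^*)$ is fixed, so the approximation factors are non-increasing, $\alpha_0 \ge \alpha_1 \ge \alpha_2 \ge \cdots$, where $\alpha_0$ is the approximation factor right after \kmp{}. Consequently, if the approximation factor is still above $k^{0.45}$ at the end of the $N_0$-th successful round, then $\alpha_{i-1} > k^{0.45} > 10^9$ at the \emph{start} of every one of those successful rounds.

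Next I would analyze one successful round in this regime. Let $t_1 < t_2 < \cdots < t_{N_0}$ be the indices of the first $N_0$ successful rounds, and suppose for contradiction that $\alpha_{t_{N_0}} > k^{0.45}$. By monotonicity $\alpha_{t_j-1} > k^{0.45}$ for every $j$, hence $\sqrt[3]{\alpha_{t_j-1}} > k^{0.15}$ and so $\min\{\sqrt[3]{\alpha_{t_j-1}},\, k^{0.15}\} = k^{0.15}$; moreover $\alpha_{t_j-1} > 10^9$ rules out case (A) of the definition of a successful step, so each such round falls under case (B), giving
\[
	\alpha_{t_j} \le \left( 1 - \frac{k^{0.15}}{2 \cdot 10^4 k} \right) \alpha_{t_j-1} = \left( 1 - \frac{1}{2 \cdot 10^4 k^{0.85}} \right) \alpha_{t_j-1}.
\]

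Then I would chain these estimates, using that the approximation factor does not increase between consecutive successful rounds (so $\alpha_{t_{j+1}-1} \le \alpha_{t_j}$) and $\alpha_{t_1-1} \le \alpha_0$, to obtain
\[
	\alpha_{t_{N_0}} \le \left( 1 - \frac{1}{2 \cdot 10^4 k^{0.85}} \right)^{N_0} \alpha_0 \le \exp\!\left( -\frac{N_0}{2 \cdot 10^4 k^{0.85}} \right) \alpha_0 = \exp(-k^{0.1}) \cdot \alpha_0,
\]
where I used $1-x \le e^{-x}$ and the choice $N_0 = 2 \cdot 10^4 k^{0.95}$. Finally, condition (II) gives $\alpha_0 \le \exp(k^{0.1})$, so $\alpha_{t_{N_0}} \le 1 < k^{0.45}$, contradicting $\alpha_{t_{N_0}} > k^{0.45}$. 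This contradiction proves that after $N_0$ successful rounds the clustering is $k^{0.45}$-approximate.

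The only mildly delicate point is the index bookkeeping: one must be careful that the per-round contraction factor $1 - \tfrac{1}{2\cdot 10^4 k^{0.85}}$ genuinely compounds over exactly the $N_0$ successful rounds, irrespective of how many unsuccessful rounds are interleaved — this is exactly where the monotonicity $\alpha_{t_{j+1}-1}\le\alpha_{t_j}$ is needed. Beyond that, the argument is just the elementary inequality $(1-x)^m \le e^{-mx}$ combined with hypothesis (II), so I expect no substantive obstacle.
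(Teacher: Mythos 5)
Your proof is correct and takes essentially the same approach as the paper: both argue by contradiction, use monotonicity of the approximation factor to show that $\alpha_{i-1} > k^{0.45} > 10^9$ before each of the first $N_0$ successful rounds (so case (B) applies with $\min\{\sqrt[3]{\alpha_{i-1}}, k^{0.15}\} = k^{0.15}$), compound the per-round factor $1 - k^{0.15}/(2\cdot 10^4 k)$ over $N_0$ rounds via $(1-x)^m \le e^{-mx}$, and invoke condition (II) to land at $\alpha \le 1$. You simply spell out the index bookkeeping that the paper leaves implicit.
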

\begin{proof}
	As we condition on (II), the initial approximation factor is no worse than $\exp(k^{0.1})$.
	For the sake of contradiction, assume that the approximation factor after $N_0$ successes is strictly greater than $k^{0.45}$.
	This implies that in each of the first $N_0$ successful rounds, we improve the approximation by a factor of at least $(1 - \frac{ k^{0.15} }{2 \cdot 10^4 k})$.
	Thus, the approximation factor after $N_0$ successes is at most
	\begin{align*}
		\left(1 - \frac{k^{0.15}}{2\cdot10^4 k}\right)^{2\cdot10^4 \cdot k^{0.95}} \cdot \exp(k^{0.1})
		 & = \left(1 - \frac{k^{-0.85}}{2\cdot10^4} \right)^{2\cdot10^4 \cdot k^{0.95}} \cdot \exp(k^{0.1})           \\
		 & \leq \exp \left( \frac{-k^{-0.85}}{2\cdot10^4} \cdot 2\cdot10^4 \cdot k^{0.95} \right) \cdot \exp(k^{0.1}) \\
		 & \leq \exp(-k^{0.1} + k^{0.1}) \leq 1 \leq k^{0.45},
	\end{align*}
	a contradiction.
\end{proof}
\begin{proposition}
	\label{prop:phases2}
	Assume that conditions (I) and (II) are fulfilled.
	We define $\gamma_i := \frac{k^{0.45}}{2^{i}}$.
	Furthermore, for $i \geq 1$, let $N_i := \frac{2 \cdot 10^4 \cdot k}{\sqrt[3]{\gamma_i}}$.
	Then, for each $R \geq 0$, after $\sum_{i=0}^R N_i$ successes, we have a $\max \{ \gamma_R,10^9 \}$-approximation.
\end{proposition}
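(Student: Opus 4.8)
The plan is to prove this by induction on $R$, peeling off one ``phase'' at a time, where phase $i$ corresponds to driving the approximation factor from roughly $\gamma_{i-1}$ down to $\gamma_i$ using $N_i$ successful local search steps. The base case $R=0$ is exactly \Cref{prop:phases1}: after $N_0$ successes we have a $k^{0.45}$-approximation, and since $\gamma_0 = k^{0.45}$, we indeed have a $\max\{\gamma_0, 10^9\}$-approximation (using the hypothesis $k^{0.45} > 10^9$, which follows from the assumption $k \in \Omega(1/\eps^{20})$ in the main theorem, or can be carried as a standing assumption as in \Cref{prop:phases1}).

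For the inductive step, suppose that after $\sum_{i=0}^{R-1} N_i$ successes we have a $\max\{\gamma_{R-1}, 10^9\}$-approximation. If at any point during the next $N_R$ successful steps the approximation factor drops to $10^9$ or below, then by definition every subsequent step is automatically successful (case (A)), and since the cost is non-increasing we retain a $\max\{\gamma_R, 10^9\}$-approximation and are done. Otherwise, the approximation factor $\alpha$ stays above $\max\{\gamma_R, 10^9\}$ throughout, so each of these $N_R$ successful steps falls under case (B) and multiplies the approximation factor by at most $\left(1 - \frac{\min\{\sqrt[3]{\alpha}, k^{0.15}\}}{2 \cdot 10^4 k}\right)$. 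Since $\alpha \geq \gamma_R$ and $\gamma_R \leq k^{0.45}$ gives $\sqrt[3]{\gamma_R} \leq k^{0.15}$, the per-step shrink factor is at most $\left(1 - \frac{\sqrt[3]{\gamma_R}}{2 \cdot 10^4 k}\right)$. Over $N_R = \frac{2 \cdot 10^4 k}{\sqrt[3]{\gamma_R}}$ such steps, and starting from at most $\gamma_{R-1} = 2\gamma_R$, the resulting approximation factor is at most
\[
2\gamma_R \cdot \left(1 - \frac{\sqrt[3]{\gamma_R}}{2 \cdot 10^4 k}\right)^{\frac{2 \cdot 10^4 k}{\sqrt[3]{\gamma_R}}} \leq 2 \gamma_R \cdot \e^{-1} \leq \gamma_R,
\]
using $1+x \leq \e^x$ and $2/\e < 1$. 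This completes the induction.

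The main subtlety — not really an obstacle but the point that needs care — is the bookkeeping around the two cases (A) and (B) inside a single phase: one must argue that once the factor enters the ``$\leq 10^9$'' regime it never leaves (immediate, since cost is monotone non-increasing in \localsearch{} and we only count successful steps), and that otherwise the uniform bound $\sqrt[3]{\alpha} \geq \sqrt[3]{\gamma_R}$ holds at every step of the phase, which is where we use that the approximation only decreases so $\alpha$ never falls below the target $\gamma_R$ before the phase ends. A second minor point is checking $\sqrt[3]{\gamma_i} \leq k^{0.15}$ for all $i \geq 0$, which holds because $\gamma_i \leq \gamma_0 = k^{0.45}$, so the $\min$ in \Cref{lem:lemma3} always resolves to $\sqrt[3]{\alpha}$ once $\alpha$ is in the relevant range $[\gamma_R, \gamma_{R-1}]$ — ensuring the clean per-step factor used above. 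Everything else is the routine $(1+x) \leq \e^x$ estimate already rehearsed in \Cref{prop:phases1}.
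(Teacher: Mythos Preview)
Your proof is correct and is essentially the same as the paper's: induction on $R$ with base case \Cref{prop:phases1}, and an inductive step that lower-bounds the per-step shrink by $\left(1 - \frac{\sqrt[3]{\gamma_R}}{2\cdot 10^4 k}\right)$ and then applies $(1+x)\le \e^x$ over $N_R$ successes to drop from $\gamma_{R-1}=2\gamma_R$ to $\gamma_R$. The only cosmetic difference is that the paper phrases the inductive step as a proof by contradiction (assume the target $\max\{\gamma_{R+1},10^9\}$ is not reached, hence $\alpha$ stays above both $\gamma_{R+1}$ and $10^9$ at every step), whereas you do the equivalent direct case split; your handling of the $\min\{\sqrt[3]{\alpha},k^{0.15}\}$ and of the ``already $\le 10^9$'' case is the same in substance.
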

\begin{proof}
	We prove the statement by induction on $R$. For $R = 0$, the statement directly follows from \cref{prop:phases1}.
	Now, let $R > 0$ be arbitrary.
	We assume that the statement holds for $R$, and we show that this implies that the statement holds for $R+1$.
	For the sake of contradiction, assume that the statement does not hold for $R+1$, i.e., the approximation is strictly worse than $\max\{\gamma_{R+1}, 10^9\}$ after an additional $N_{R+1}$ successful rounds.
	In particular, this would mean that we never achieve a $10^9$-approximation.
	Thus, in each of the additional $N_{R+1}$ successful iterations, we would improve the solution by a factor of at least $(1 - \frac{ \sqrt[3]{\gamma_{R+1}} }{2 \cdot 10^4 k})$.
	As we started with an approximation factor no worse than $\gamma_R$, the approximation factor after $N_{R+1}$ successful rounds can be upper bounded by
	\begin{align*}
		\left(1 - \frac{\sqrt[3]{\gamma_{R+1}}}{2\cdot10^4 k}\right)^{\frac{2\cdot10^4 k}{\sqrt[3]{\gamma_{R+1}}}} \cdot \gamma_R
		 & \leq \exp \left( -\frac{\sqrt[3]{\gamma_{R+1}}}{2\cdot10^4 k} \cdot \frac{2\cdot10^4 k}{\sqrt[3]{\gamma_{R+1}}} \right) \cdot \gamma_R \\
		 & \leq e^{-1} \cdot \gamma_R < \gamma_{R+1},
	\end{align*}
	a contradiction.
\end{proof}

Finally, we can prove \cref{thm:main}.

\mainthm*
\begin{proof}[Proof of \cref{thm:main}]
	First, recall that we can assume that conditions (I) and (II) are fulfilled, which holds with probability $1 - \exp(-\Omega(k^{0.1}))$.
	Let $\alpha_0 \leq \exp(k^{0.1})$ be our approximation factor after the execution of \kmp{}.
	From \Cref{prop:phases2}, we know that after $\sum_{i=0}^R N_i$ successful iterations we have an approximation factor of at most $\gamma_R = \max \{ \frac{k^{0.45}}{2^R}, 10^9 \}$.
	Setting $R = \log_2(k^{0.45} \eps^3) - 3 \log_2(32 \cdot 10^8)$, we get that $\gamma_R \leq \frac{k^{0.45}}{2^R} \leq \frac{10^{30}}{\eps^3}$.
	For the number of successful iterations needed, we have:
	\begin{align*}
		\sum_{i=0}^{\log_2(k^{0.45} \eps^3) - 3 \log_2(32 \cdot 10^8)} N_i
		 & \leq 2 \cdot 10^4 \cdot k^{0.95} + \sum_{i=1}^{\log_2(k^{0.45} \eps^3) - 3 \log_2(32 \cdot 10^8)} \frac{2 \cdot 10^4 \cdot k}{\sqrt[3]{k^{0.45}/2^i}}                                                         \\
		 & \leq 2 \cdot 10^4 \cdot k^{0.95} + 2 \cdot 10^4 \cdot k \sum_{i=1}^{\log_2(k^{0.45} \eps^3) - 3 \log_2(32 \cdot 10^8)} \sqrt[3]{2^i / k^{0.45}}                                                               \\
		 & \leq 2 \cdot 10^4 \cdot k^{0.95} + 2 \cdot 10^4 \cdot k \cdot \frac{1}{1 - 1 / \sqrt[3]{2}} \cdot 2^{\left( \log_2(k^{0.45} \eps^3) - 3 \log_2(32 \cdot 10^8) \right) / 3} \cdot \frac{1}{\sqrt[3]{k^{0.45}}} \\
		 & \leq 2 \cdot 10^4 \cdot k^{0.95} + \frac{\eps k}{16 \cdot 10^3} \leq \frac{\eps k}{8000}.
	\end{align*}
	By \Cref{prop:successes}, we can conclude that within $\eps k$ steps of local search, at least $\eps k / 8000$ are successful with probability at least $1 - \exp(- \Omega(k^{0.1}))$, thus proving the theorem.
\end{proof}

\end{document}